\newtheorem{theorem}{Theorem}[section]
\newtheorem{corollary}[theorem]{Corollary}
\newtheorem{remark}[theorem]{Remark}
\newcommand*\Prob{\mathop{}\!\mathbb{P}}
\newcommand*\ee{\mathop{}\!\operatorname{e}}
\newcommand*\dd{\mathop{}\!\mathrm{d}}
\author[M. \smash{Bladt}]{Martin Bladt}
\address[M. Bladt]{Department of Actuarial Science, Faculty of Business and Economics, University of Lausanne, CH-1015 Lausanne, Switzerland}
\email{{martin.bladt@unil.ch}}
\author[H. \smash{Albrecher}]{Hansj\"org Albrecher}
\address[H. Albrecher]{ Department of Actuarial Science, Faculty of Business and Economics and Swiss Finance Institute, University of Lausanne, CH-1015 Lausanne, Switzerland}
\email{{hansjoerg.albrecher@unil.ch}}
\author[J. \smash{Beirlant}]{Jan Beirlant}
\address[J. Beirlant]{Department of Mathematics,
	KU Leuven, Celestijnenlaan 200B, B-3001 Leuven, Belgium}
\email{{jan.beirlant@kuleuven.be}}
\title{Combined Tail Estimation Using Censored Data and Expert Information}% with applications in reinsurance}
\begin{document}

%\tableofcontents
\begin{abstract}
We study tail estimation in Pareto-like settings for datasets with a high percentage of randomly right-censored data, and where some expert information on the tail index is available for the censored observations. This setting arises for instance naturally for liability insurance claims, where actuarial experts build reserves based on the specificity of each open claim, which can be used to improve the estimation based on the already available data points from closed claims. Through an entropy-perturbed likelihood we derive an explicit estimator and establish a close analogy with Bayesian methods. Embedded in an extreme value approach, asymptotic normality of the estimator is shown, and when the expert is clair-voyant, a simple combination formula can be deduced, bridging the classical statistical approach with the expert information. Following the aforementioned combination formula, a combination of quantile estimators can be naturally defined. In a simulation study, the estimator is shown to often outperform the Hill estimator for censored observations and recent Bayesian solutions, some of which require more information than usually available. Finally we perform a case study on a motor third-party liability insurance claim dataset, where Hill-type and quantile plots incorporate ultimate values into the estimation procedure in an intuitive manner.
\end{abstract}
\maketitle

\textbf{Keywords:} penalized likelihood, extreme value statistics, regular variation, survival analysis.

\section{Introduction}	

In applied statistics, one is often faced with the need to combine different types of information to produce a single decision. For instance, in credibility theory, the weights that link a relevant but small dataset with a big but not-so-relevant dataset are looked for, or similarly in bioinformatics one may use a dataset containing different cell types in the estimation of a cell, and scale down their importance in various ways.

The present paper is motivated by a specific problem in liability insurance. In that line of business, claim size data usually have a high percentage of censored observations, as policies take years, or even decades, to be finally settled. Due to the limited number of claims, one still would like  to take into account available information about the open claims in the estimation of claim size distributions (see e.g.\ \cite{abt}). On the one hand, experts typically project the final amount of open claims, i.e.\ they propose \textit{incurred values}, or also \textit{ultimates} based on covariate information or other (objective or subjective) considerations which are not in the payment dataset that arrives at a statistician's table. On the other hand, statisticians have standard ways of dealing with censored observations, for instance  the Peaks over Threshold method when one is interested in extremes, as well as the Hill estimator for heavy and Pareto-like tails. This research has started in \cite{beirlcens2007} and \cite{einmahl2008statistics} and has received more attention recently, see e.g.  \cite{worms2014new}, \cite{ameraoui2016bayesian}, \cite{beirlant2018penalized}.  However, in that line of extreme value methods expert information has not been incorporated. In \cite{abt}, incurred values were used to derive upper bounds for the open claims and survival analysis methods for interval censored data were implemented. See also \cite{lesaffre} for frequentist and Bayesian analysis of interval censored data.\\
One often faces the question of whether to conduct the analysis from the right-censored observations point of view, or whether to use the imputed ultimate (expert) values into the dataset and treat it as a fully-observed dataset. The latter is typically an easy (and cheap) solution. Figure \ref{description0} illustrates a possible situation of available data for motor third-party liability (MTPL) insurance claims of a direct insurance company operating in the EU, cf.\ Section \ref{MTPL_sect} for more details, where this data set will be studied. In what follows we are interested in developing a procedure that combines both approaches, without making any assumptions on the quality or method used to obtain the expert information. 

\begin{figure}[]
\centering
\includegraphics[width=10cm,trim=0.5cm 0.5cm 0.5cm 0.5cm,clip]{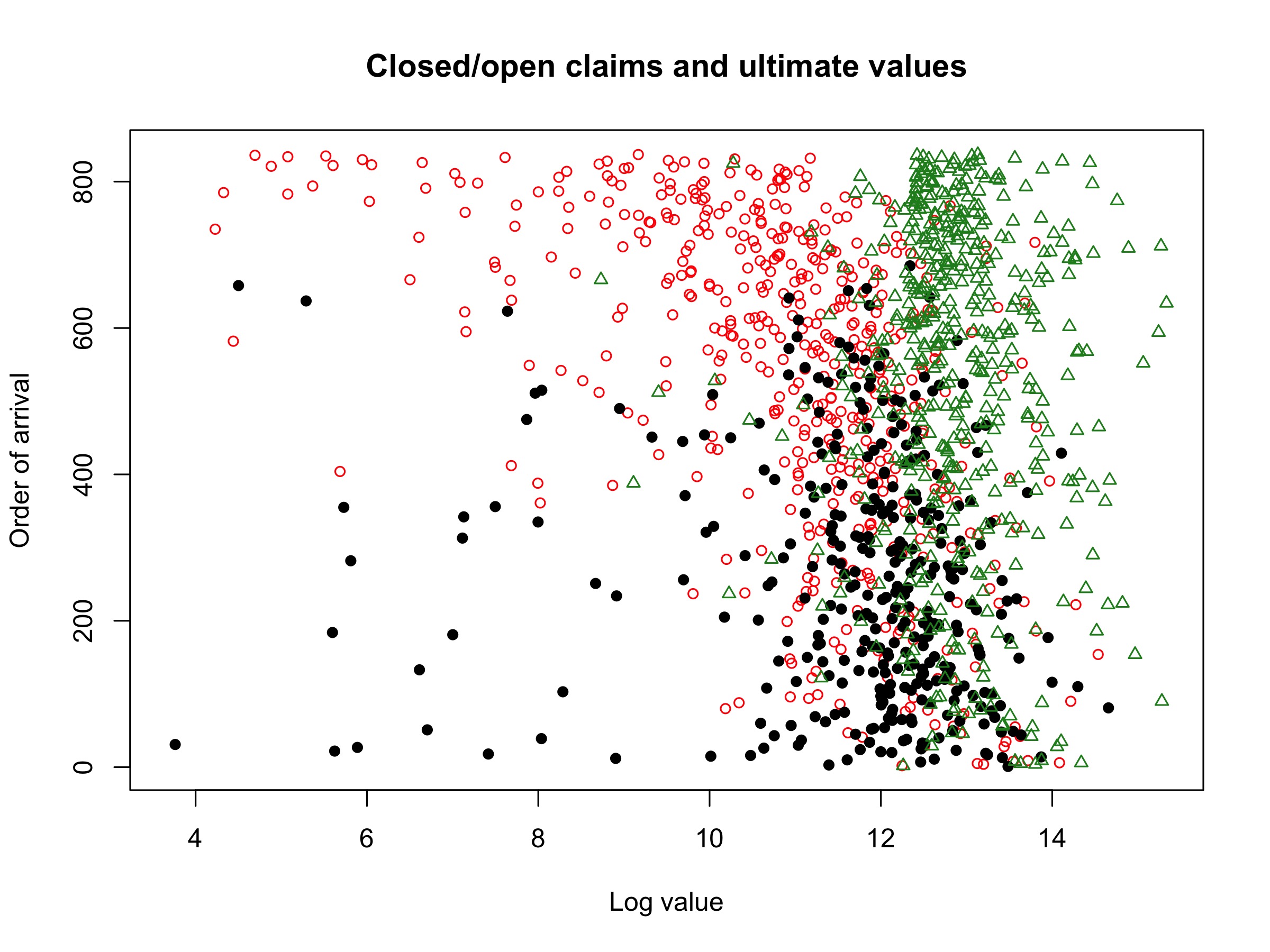}
\caption{Motor third-party liability insurance: log-claims in vertical order of arrival, showing the paid amount for both open (red, circle) and closed (black, dot) claims, as well as ultimate values for the open claims (green, triangle).} 
\label{description0}
\end{figure}
%, stimulating experts  to combine benchmark estimators jointly with internal tail considerations.

To that end, we assume that for each censored observation (open claim), we have a tail parameter $\beta_i$ which reflects the belief of the expert on the heaviness of the tail of this particular (unsettled)  observation. The typical situation may be that all the $\beta_i$ are equal or that there is an upper and a lower bound for all of them. However, we develop the theory for the general case, and we embed these indices into a statistical framework, where a single tail parameter is estimated for the entire dataset. At a philosophical level, the proposal of different $\beta_i$ is not an ill-posed problem,  it rather shows a prior variability of a belief on the tail index. 
%Note, however, that all the $\beta_i$ being equal does not necessarily mean that there is no variability, but rather that all censored observations behave similarly. 

The difference with the Bayesian paradigm is the fact that we only make the assumption for censored observations, such that increasing sample sizes but constant censored proportions will keep the importance of the expert guesses constant, with respect to the rest of the data. In mathematical terms we will see that our approach will have a Bayesian interpretation when the prior distribution of the parameter depends on the sample through the censoring indicators.

We propose a perturbation of the likelihood via an exponential factor and use the relative entropy between two densities as a dissimilarity measure. The resulting maximum perturbed-likelihood estimate has an explicit formula which resembles the Hill estimator adapted for censoring (cf. \cite{Hill}; \cite{beirlcens2007}), degenerates into it as the perturbation becomes small, and converges to the mean of the expert tail indices as the perturbation becomes large. Thus, in a similar way as in the prior specification for Bayesian estimation, if  experts have additional information on the quality of their belief, the perturbation parameter can be tuned accordingly. However, we propose a method which does not assume such additional prior knowledge, apart from the original expert knowledge.

Penalization is a prevalent idea which has gained popularity in the age of cheap computational power. The idea behind it is to impose beliefs on the statistical estimation which can yield a better estimation or an estimator which has more acceptable properties for the application at hand. It can help to impute a control or perturbation parameter which in turn helps to tailor estimators towards a certain degree of convenience. For instance, in a different statistical setting, the Lasso or Ridge regression imposes the belief or need of a data scientist to reduce the number of covariates included in a covariate-response analysis. In some cases this procedure helps to remove nuisance covariates, but in others it might be too aggressive and exclude truly informative variables. This bottleneck is specific to each application -- and even to each dataset -- which suggests that a fully automated procedure is not recommended. In the same vein, the effectiveness of the proposed method depends on the quality of the provided expert tail information, and this is something which is not always available or quantifiable. In any case, it is recommended that the tail inference is done side-by-side with the experts.

%Classically, the perturbation parameter is found through cross-validation. Although we adopt this approach here too, we need to impose a prior on the  (minus log-)likelihood loss function, since when reducing the dataset at each fold in the cross-validation routine we are bound to select too few large claims, and hence a  weight for large-tail parameter estimations will lessen this effect. We show through a simulation study that when used sensibly, our method can reduce variance and in some cases even the mean squared error, the latter for relatively unbiased tail information. 

%In case one chooses to avoid imposing priors, 
%(which can simply be seen as an incentive to choose fat tails) 
We derive the asymptotic properties of the perturbed-likelihood estimator, and although the asymptotic mean square error (AMSE) is available, the parameter which minimizes it depends on subjective considerations. One such consideration is the strength of the  belief in the expert information provided: if the belief is certain, the penalization parameter goes to infinity, which means that the data should be ignored and the expert information should be used instead, based on the AMSE criterion. To avoid assumptions which are not realistically available in practice, we instead suggest selecting the penalizing parameter in a convenient way, as the one which reduces the perturbed estimator to a simple sum of identifiable components. When the expert information is precise (degenerate at the true parameter) it holds that the penalization weight equal to 1 always leads to a lower AMSE, the latter also having a formula with a pleasant interpretation. When substituting this penalization parameter into the original formula, a very simple interpretation of the (inverse of the) estimator is available (Theorem \ref{lambdaone}): it is the combination of the Hill estimator and the expert information, the weights being the proportion of non-censored and censored data-points, respectively. Such a simple combination estimator is shown for a variety of common heavy-tailed distributions and for a variety of parameters to perform very well alongside competing methods, which need information to tune their own parameters.
% The latter result can serve as a rule of thumb to %experts that are very (possibly overly) confident on %their tail estimates.

The remainder of the paper is structured as follows. In Section \ref{deriv}, for the exact Pareto distribution, we introduce the notation and perturbation that we will deal with, as well as deriving simple expressions for the maximum perturbed-likelihood estimators and showing how they bridge theory and practice in a smooth manner. In Section \ref{Bayes_sect} we establish a close link with Bayesian statistics. In Section \ref{evt} we extend the methodology to the case where the data only exhibit Pareto-type behaviour in the tail, and we derive the asymptotic distributional properties of the perturbed-likelihood estimator and unveil a simple combination formula. In this more general heavy-tailed case, we naturally deal with estimators that use only a fraction of upper order statistics, and introduce as benchmarks some recent Bayesian estimators that have been proposed for censored datasets.  In Section \ref{MTPL_sect} we perform a simulation study and a real-life motor third-party liability insurance application. The latter data has been studied in the literature from both the expert information and the censored dataset viewpoints, but not yet in a joint manner, as we do here. We conclude in Section \ref{conclusion_sect}.

\section{Derivation and properties}\label{deriv}

Consider the estimation from a censored sample following an exact Pareto distribution. That is, we observe the randomly censored data-points and the binary indicator censoring variables:
\begin{align*}
(Z_1,\,\ee_1),\: (Z_2,\,\ee_2),\dots,(Z_n,\,\ee_n).
\end{align*}
Contrary to classical survival analysis, the $Z_i$ here correspond to payment sizes rather than times. The density and tail of the non-censored underlying data (which is not observed) are given by
\begin{align}\label{exactpareto}
f_\alpha(x)=\frac{\alpha x_0^\alpha}{x^{\alpha+1}},\quad \overline F_\alpha(x)=\frac{ x_0^\alpha}{x^{\alpha}}, \quad x\ge x_0>0,
\end{align}
with unknown tail parameter $\alpha$ and known scale parameter $x_0$. The latter assumption poses no restriction, since we are interested only in the estimation of the tail index, and as we will see, the Hill-type estimators based on upper order statistics that will be considered depend only on the log spacings of the data, which are independent of the scale parameter.

Additionally, and in contrast to classical survival analysis, we assume that we are given for each (right-)censored data-point an expert information of the possible tail parameter, i.e. we have   knowledge of $\beta_i>0$ for $i=1,\dots, n$. This can arise, for instance, when  the data are collected from different sources, or the realization of a data-point showing some pattern due to a particular settlement history, or some covariate information that can not be included in a more direct way. However, it is believed that all data points eventually come from one underlying distribution (or at least one aims for such a modelling description). We are also primarily interested in the case where all the $\beta_i$ are the same, since more often than not, expert information can come in this format.

When ignoring the information from the data, natural estimates of $\alpha$ are given by the weighted arithmetic and harmonic mean
\begin{align}\label{expavg}
\hat\alpha_{\text{am}}=\frac{\sum_{i=1}^n (1-\ee_i)\beta_i}{\sum_{i=1}^n (1-\ee_i)},\quad \hat\alpha_{\text{hm}}=\frac{\sum_{i=1}^n (1-\ee_i)}{\sum_{i=1}^n (1-\ee_i)/\beta_i},
\end{align}
respectively, where $\ee_i=0$ if $Z_i$ is right-censored, and $\ee_i=1$ otherwise.
%, and where $\omega_i$ are weights. 
On the other hand, in the context of survival analysis it is a standard approach to maximize the following likelihood based purely on the data
\begin{align*}
\mathcal{L}(\alpha;z)=\prod_{i=1}^n f_{\alpha}(z_i)^{\ee_i} \overline F_\alpha(z_i)^{1-\ee_i}=\prod_{i=1}^n \left(\frac{\alpha x_0^\alpha}{z_i^{\alpha+1}}\right)^{\ee_i} \left(\frac{ x_0^\alpha}{z_i^{\alpha}}\right)^{1-\ee_i}.
\end{align*}
 The maximum likelihood estimator is then given by
\begin{align}\label{hill}
\hat\alpha^{MLE}=\frac{\sum_{i=1}^n\ee_i}{\sum_{i=1}^n\log\left(\frac{Z_i}{x_0}\right)},
\end{align}
which is an adaptation (see \cite{beirlcens2007}) to the censoring case of the famous Hill  estimator (cf. \cite{Hill}) from extreme value theory obtained by Peaks-over-Threshold modelling; see \cite{embrechts2013modelling} or \cite{BGST2004} for a broader treatment on Pareto-type tail estimation, see also Section 4.

The two aforementioned approaches to the estimation of the tail parameter are in practice separated. That is, an expert will take only one of the two approaches, based on factors such as reliability of the expert information or data availability.  This is an especially difficult decision when there is a high percentage of censored and large observations, which present a key problem in the estimation in statistics in general, see for instance \cite{leung1997censoring}.

In the present paper, we introduce an estimator which bridges the previous estimators. We will proceed by the perturbation of the likelihood function, and consider a penalized likelihood:
\begin{align*}
\mathcal{L}^P(\alpha;z)&=\prod_{i=1}^n f_{\alpha}(z_i)^{\ee_i} \overline F_\alpha(z_i)^{1-\ee_i}e^{-(1-\ee_i)\lambda D(\alpha,\beta_i)}\\&=\prod_{i=1}^n \left(\frac{\alpha x_0^\alpha}{z_i^{\alpha+1}}\right)^{\ee_i} \left(\frac{ x_0^\alpha}{z_i^{\alpha}}\,e^{-\lambda D(\alpha,\beta_i)}\right)^{1-\ee_i},
\end{align*}
where the factor $e^{-\lambda D(\alpha,\beta_i)}$ penalizes the contribution of the censored observations according to some measure of dissimilarity between $f_\alpha$ and the Pareto distribution with parameter $\beta_i$, denoted by $D(\alpha,\beta_i)$, and the $\lambda\ge0$ models the strength of the penalization imposed by $D(\alpha,\beta_i)$. We propose to use the relative entropy as a dissimilarity measure:

\begin{align}\label{entropypenaliz}
D(\beta_i,\alpha)=\int_{x_0}^\infty\log\left(\frac{g_i(s)}{f_\alpha(s)}\right)g_i(s)\dd s=\frac{\alpha}{\beta_i} -1- \log\left(\frac{\alpha}{\beta_i}\right)\ge 0,
\end{align}
where $g_i$ is a Pareto density with tail index $\beta_i$ and scale parameter $x_0$. The associated log-likelihood is then given by

\begin{align}\label{exppen}
\log \left(\mathcal{L}^P(\alpha,z)\right)=\sum_{i=1}^n \ee_i\log\left(\frac{\alpha x_0^\alpha}{z_i^{\alpha+1}}\right)+\sum_{i=1}^n (1-\ee_i)\log\left(\frac{ x_0^\alpha}{z_i^{\alpha}}\right)-\sum_{i=1}^n\lambda (1-\ee_i)D(f_\alpha,g_i).
\end{align}
Equation \eqref{exppen} turns out to have an explicit minimizer when using $D$ from \eqref{entropypenaliz} (we omit the details), given by
\begin{align*}
\hat\alpha^{P}(\lambda)=\frac{\sum_{i=1}^n(\ee_i+\lambda(1-\ee_i))}{\sum_{i=1}^n(\log(Z_i/x_0)+\lambda(1-\ee_i)/\beta_i)}.
\end{align*}

Notice that if we flip the densities in the entropy penalization and consider instead

\begin{align*}
D(\alpha,\beta_i)=\int_{x_0}^\infty\log\left(\frac{f_\alpha(s)}{g_i(s)}\right)f_\alpha(s)\dd s=\frac{\beta_i}{\alpha} -1- \log\left(\frac{\beta_i}{\alpha}\right)\ge 0,
\end{align*}
the associated penalized likelihood has the explicit solution
\begin{align}\label{expenalpha}
&\hat\alpha^{I}(\lambda)=\\
&\frac{ \sum_{1}^n(\ee_i-\lambda(1-\ee_i))+\sqrt{ \left[\sum_{1}^n(\ee_i-\lambda(1-\ee_i)\right]^2+4\sum_{1}^n\log\left(\frac{Z_i}{x_0}\right)\cdot\sum_{1}^n\beta_i\lambda(1-\ee_i)}  }{2 \sum_{1}^n\log\left(\frac{Z_i}{x_0}\right)},\nonumber
\end{align}
which is less appealing, with more complicated asymptotic properties.

\begin{remark}\normalfont
The particular choice of entropy penalization is mathematical in nature, since the resulting explicit and simple form of the maximum likelihood estimator permits a deeper analysis than other choices. For instance, the significantly more complicated explicit estimators \eqref{expenalpha} or \eqref{gaussestim} lead to a much more involved analysis.
\end{remark}

\begin{remark}\normalfont
In general, with lack of any other type of information, giving equal weight to each censored observation is the most natural way to deal with them. If the expert has an idea of the importance of each data point and their corresponding tail indices $\beta_i$, the selection can be done on a single parameter $\lambda$ through
\begin{align*}
\lambda_i=\lambda \omega_i.
\end{align*}
Note that then
\begin{align}\label{l2}
\lim_{\lambda\to \infty} \hat\alpha^{P}(\lambda)=\frac{\sum_{i=1}^n(1-\ee_i)\omega_i}{\sum_{i=1}^n(1-\ee_i)\omega_i/\beta_i},
\end{align}
and

\begin{align}\label{l1}
\lim_{\lambda\to \infty} \hat\alpha^{I}(\lambda)=\frac{\sum_{i=1}^n(1-\ee_i)\omega_i\beta_i}{\sum_{i=1}^n(1-\ee_i)\omega_i},
\end{align}
i.e.\ the information brought by the data becomes irrelevant and we take a weighted average of the expert guesses. Taking uniform weights, i.e., giving equal importance to each censored observation, will result in \eqref{expavg}.

If no weights are naturally suggested one can always tackle the multi-dimensional selection problem on all $\lambda_i$. In this more general case we have that 
\begin{align}\label{l3}
\lim_{\lambda_i\to 0;\:i=1,\dots,n} \hat\alpha^{P}(\lambda_1,\dots,\lambda_n)=\frac{\sum_{i=1}^n\ee_i}{\sum_{i=1}^n\log\left(\frac{Z_i}{x_0}\right)}.
\end{align}
which can readily be deduced directly from \eqref{exppen}, since it is the classical non-penalized estimator. Similarly
\begin{align}\label{l4}
\lim_{\lambda_i\to 0;\:i=1,\dots,n} \hat\alpha^{I}(\lambda_1,\dots,\lambda_n)=\frac{\sum_{i=1}^n\ee_i}{\sum_{i=1}^n\log\left(\frac{Z_i}{x_0}\right)}.
\end{align}

  As a consequence of the limits \eqref{l2}, \eqref{l1}, \eqref{l3} and \eqref{l4}, we readily get  for $\lambda_1=\cdots=\lambda_n=:\lambda\ge 0$ that
  \begin{align*}
\lim_{\lambda\to \infty}\hat\alpha^{P}(\lambda)= \hat\alpha_1,\quad\lim_{\lambda\to \infty} \hat\alpha^{I}(\lambda)= \hat\alpha_2,\quad \lim_{\lambda\to 0}\hat\alpha^{P} =\lim_{\lambda\to 0}\hat\alpha^{I}(\lambda)= \hat\alpha^{MLE},
\end{align*}
which confirms that the estimator bridges the estimation of $\alpha$ and the proposal of the $\beta_i$, and that the parameter $\lambda$ reflects in some sense the strength of the belief on the expert information. The next section will touch upon this interpretation in a more precise manner.
\end{remark}

\section{Penalization seen as a Bayesian prior}\label{Bayes_sect}
We will use a single $\lambda$ value in practice, but here we assume the most general setting where the $\lambda_i$ could be different, at no complexity cost. The penalized likelihood that gives rise to $\hat\alpha^{P}$ is given by

\begin{align*}
\mathcal{L}^P(\alpha;z)&=\prod_{i=1}^n \left(\frac{\alpha x_0^\alpha}{z_i^{\alpha+1}}\right)^{\ee_i} \left(\frac{ x_0^\alpha}{z_i^{\alpha}}\right)^{1-\ee_i}e^{-\lambda_i(1-\ee_i) (\alpha/\beta_i-1-\log(\alpha/\beta_i))}\\
&=\left[\prod_{i=1}^n \left(\frac{\alpha x_0^\alpha}{z_i^{\alpha+1}}\right)^{\ee_i} \left(\frac{ x_0^\alpha}{z_i^{\alpha}}\right)^{1-\ee_i}\right]\cdot \left[\alpha^{\sum_{i=1}^n\lambda_i(1-\ee_i)}e^{-\alpha \sum_{i=1}^n\lambda_i(1-\ee_i)/\beta_i}\right]\\
&\quad\times\left[\prod_{i=1}^n\beta_i^{-\lambda_i(1-\ee_i)}e^{\lambda_i(1-\ee_i)}\right]\\
&=\left[\alpha^{\sum_{i=1}^n(\ee_i+\lambda_i(1-\ee_i))}e^{-\alpha\sum_{i=1}^n(\lambda_i(1-\ee_i)/\beta_i+\log(z_i/x_0))}\right]\\
&\quad \times\left[\prod_{i=1}^n\beta_i^{-\lambda_i(1-\ee_i)}e^{\lambda_i(1-\ee_i)}z_i^{-\ee_i}\right].
\end{align*}

\noindent Note that the second factor after the last equality sign does not depend on $\alpha$, and the first one is proportional to a gamma density. We thus recognize that the penalized maximum likelihood estimator can be seen as the posterior mode arising from a Pareto likelihood and the conjugate gamma prior with hyper-parameters
\begin{align*}
\alpha_0=\sum_{i=1}^n\lambda_i(1-\ee_i)+1,\quad \beta_0=\sum_{i=1}^n\lambda_i(1-\ee_i)/\beta_i,
\end{align*}
and corresponding posterior parameters
\begin{align*}
\alpha^\ast=\sum_{i=1}^n(\ee_i+\lambda_i(1-\ee_i))+1,\quad \beta^\ast=\sum_{i=1}^n(\lambda_i(1-\ee_i)/\beta_i+\log(z_i/x_0)).
\end{align*}

The hyper-parameters of the prior, however, do depend on the sample, namely on the censoring indicators $\ee_i$, so we are not in the classical Bayesian setting. Nonetheless, we will continue to call it a prior, for simplicity.

In this context we also have the following interpretation of the effects of the selection of the $\lambda_i$. The mode of the prior distribution is given by
\begin{align}\label{priormodevar}
\frac{\sum_{j=1}^n\lambda_j(1-\ee_j)}{\sum_{i=1}^n\lambda_i(1-\ee_i)/\beta_i}=\left(\sum_{i=1}^n\frac{\lambda_i(1-\ee_i)}{\sum_{j=1}^n\lambda_j(1-\ee_j)}\beta_i^{-1}\right)^{-1},
\end{align}
and one sees that the proportions of the $\lambda_i$ give the weights which will determine this mode. However, we can multiplicatively scale these $\lambda_i$ and the mode will remain unchanged. The magnitude of the $\lambda_i$, in contrast, does play a role for the variance of the prior:
\begin{align}
\frac{\sum_{i=1}^n\lambda_i(1-\ee_i)+1}{\left(\sum_{i=1}^n\lambda_i(1-\ee_i)/\beta_i\right)^2},
\end{align}
since the larger the $\lambda_i$, the smaller the prior variance. Thus, a single estimate as an expert information will trump the ability to effectively determine the magnitude of the penalization parameter. This is a problem which is often encountered in Bayesian statistics, and a prior is often selected nonetheless, making frequentists doubtful of this philosophical leap of faith. 

Note that the gamma distribution has two parameters, and any two descriptive statistics (presently we used the mode and variance) which bijectively map to the mode and variance can be used to give alternate full explanations as to how the proportions $\lambda_i(1-\ee_i)/\sum_{j=1}^n\lambda_j(1-\ee_j)$ and the sizes of the $\lambda_i$ play a role in the modification of the prior distribution, and hence on the expert belief.

\begin{remark}\normalfont
If instead of using the penalization given by \eqref{entropypenaliz} we simply use squared penalization given by

\begin{align*}
D(\beta_i,\alpha)=\frac{(\alpha-\beta_i)^2}{2}\ge 0,
\end{align*}
then the maximum perturbed-likelihood estimate will again be explicit and given by
\begin{align}\label{gaussestim}
\hat\alpha^{Sq}=&\frac{\sum_{i=1}^n(\lambda_i(1-\ee_i)\beta_i-\log(Z_i/x_0))}{\sum_{i=1}^n\lambda_i(1-\ee_i)}\nonumber\\
&+\frac{\sqrt{ \left[\sum_{i=1}^n(\lambda_i(1-\ee_i)\beta_i-\log(Z_i/x_0)) \right]^2+4\sum_{i=1} ^n \lambda_i(1-\ee_i)\cdot \sum_{i=1}^n\ee_i}}{\sum_{i=1}^n\lambda_i(1-\ee_i)},
\end{align}
which  naturally leads to a Gaussian prior interpretation when the $\lambda_i$ are equal. This estimator also converges to the Hill estimator as $\lambda_i\to0$, $i=1,\dots, n$, but it can have numerical instabilities when the denominator becomes very small. 
\end{remark}

\section{Extreme Value Theory}\label{evt}

We now move on to a more general heavy-tail approach and consider the case of regularly varying distributions with tail of the form 
\begin{align*}
x^{-\alpha}\ell(x), \quad \alpha>0,
\end{align*}
where $\ell$ is a slowly varying function, i.e. ${\ell (vx) \over \ell (x)} \to 1$, as $x \to \infty$ for every $v>1$. We also assume now that censoring is done at random and the data is generated as the minimum of two independent random variables 
\begin{align*}
Z_i=\min\{X_i,L_i\},
\end{align*}
with regularly varying tails:
\begin{align*}
 &\Prob(X_i>u)=u^{-\alpha}\ell(u),\\
 &\Prob(L_i>u)=u^{-\alpha_2}\ell_2(u).
\end{align*}
It follows that
\begin{align}\label{randcens}
\Prob(Z_i>u)=u^{-\alpha_c}\ell_c(u), \quad \alpha_c=\alpha+\alpha_2,
\end{align}
and slowly varying function $\ell_c=\ell\, \ell_2$.
Here we confine ourselves to the so-called \textit{Hall class} (cf.\ \cite{hall82}). This popular second-order assumption in extreme value theory often makes asymptotic identities tractable:
\begin{eqnarray}\label{hallclass}
\Prob(X_i>u) &=& C_1 u^{-\alpha}\left( 1+D_1 u^{-\nu_1}(1+o(1))\right) \mbox{ for } u \to \infty, \nonumber \\
\Prob(L_i >u) &=& C_2 u^{-\alpha_2}\left( 1+D_2 u^{-\nu_2}(1+o(1))\right) \mbox{ for } u \to \infty,
\label{HW}
\end{eqnarray} 
where $\nu_1,\nu_2, C_1, C_2$ are positive constants and $D_1,D_2$ real constants. Then, with 
\begin{align*}
C=C_1C_2,\quad \nu_*=\mbox{min} (\nu_1,\nu_2)
\end{align*}
and
\begin{align*}
D_* =\begin{cases}
D_1 , &\nu_1 < \nu_2\\
D_2, & \nu_2 < \nu_1\\ 
D_1+D_2,& \nu_1 = \nu_2,
\end{cases}
\end{align*}
we have that 
\[
\Prob(Z_i>u) = Cu^{-\alpha_c}
\left(1+D_* u^{-\nu_*}(1+o(1))\right),
\]
that is, the censored dataset is again in the Hall class. 

Denote the quantile function of $Z$ by $Q$ and consider the tail quantile function $U(x)=Q(1-x^{-1})$, $x>1$. Then we have that
\[
U(x) = C^{1/\alpha_c}\left( 1+ {D_* \over \alpha_c}C^{-\nu_*/\alpha_c}x^{-\nu_*/\alpha_c}(1+o(1))\right).
\] 
The order statistics of the data will be denoted by
\[
Z^{(1)}\ge \cdots \ge Z^{(n)},
\]
with associated censoring indicators $\ee^{(i)}$ and expert information $\beta^{(i)}$. Given a high threshold $u>x_0$, the Hill estimator adapted for censoring is 
\begin{align}\label{hillestimdef}
\hat\alpha^H_u=\frac{\sum_{i=1}^n\ee_i1\{Z_i>u\}}{\sum_{i=1}^n\log\left(\frac{Z_i}u\right)1\{Z_i>u\}}.
\end{align}
Taking  $Z^{(k)}$ for some $1\le k\le n$, as a (random) threshold $u$, we obtain the alternative order statistics version
\begin{align}\label{mlealpha}
\hat\alpha^{MLE}_k=\frac{\sum_{i=1}^k\ee^{(i)}}{\sum_{i=1}^k\log\left(\frac{Z^{(i)}}{Z^{(k+1)}}\right)} = {\hat{p}_k\over H_k},
\end{align}
where 
$$\hat{p}_k = {1 \over k}\sum_{i=1}^k \ee^{(i)}$$ 
is the proportion of non-censored observations in the largest $k$ observations of $Z$, and
$$H_k = {1 \over k}\sum_{i=1}^k\log\left(\frac{Z^{(i)}}{Z^{(k+1)}}\right)$$ is the classical Hill estimator based on the largest $k$ observations. For details on these censored versions of the Hill estimator, we refer to \cite[Sec.2]{einmahl2008statistics}.

The asymptotic distribution of $H_k$ has been studied intensively in the literature under the above second-order assumptions (see for instance \cite[Ch.4]{BGST2004}): assuming
\begin{align}\label{todelta}
\sqrt{k}(k/n)^{\nu_*/\alpha_c} \to \delta\ge0,
\end{align}
as $k,n \to \infty$ with $k/n \to 0$, we have that
\begin{align}\label{y0}
\sqrt{k}\left( H_k -{1 \over \alpha_c}\right) \stackrel{d}{\to} 
Y_0 \sim \mathcal{N}\left(
-C^{-\nu_*/\alpha_c}D_*{\nu_*\delta\over \alpha_c (\alpha_c+ \nu_*)},
 \alpha_c^{-2}\right).
\end{align}
As discussed in \cite{einmahl2008statistics}, the asymptotic bias of $\hat{p}_k$ follows from the leading term  in ${1 \over k}\sum_{i=1}^k p(U(n/i)) -p$, where 
\[
p(z)= \mathbb{P} \left(e =1 |Z=z \right),
\]
and  $p$ denotes the  asymptotic probability of non-censoring 
\begin{align*}
p=\lim_{z \to \infty}p(z) =\frac{1/\alpha_2}{1/\alpha+1/\alpha_2}=\frac{\alpha}{\alpha+\alpha_2}.
\end{align*}
Under the Hall class \eqref{HW}, we have with the definition
\[
(D/\alpha)_* =\begin{cases}
D_1/\alpha , &\nu_1 < \nu_2\\
-D_2/\alpha_2, & \nu_2 < \nu_1\\ 
D_1/\alpha-D_2/\alpha_2,& \nu_1 = \nu_2,
\end{cases}
\]
that as $x \to \infty$ 
\begin{equation}
p(U(x)) -p = p(1-p) (D/\alpha)_* \nu_* C^{-\nu_*/\alpha_c}x^{-\nu_*/\alpha_c}
(1+o(1)).
\label{biaspU}
\end{equation}
From this, assuming that $\sqrt{k}(k/n)^{\nu_*/\alpha_c} \to \delta$ as $k,n \to \infty$ with $k/n \to 0$, one gets
\[
\sqrt{k}\left( \hat{p}_k -p \right) \stackrel{d}{\to} 
\mathcal{N}\left(
p(1-p)C^{-\nu_*/\alpha_c} (D/\alpha)_* {\alpha_c \nu_*\delta \over \alpha_c + \nu_*},
 p(1-p)\right).
\]
In \cite{einmahl2008statistics} it was also derived that asymptotically $H_k$ and $\hat{p}_k$ are independent, so that under the condition \eqref{todelta} as $k,n \to \infty$ with $k/n \to 0$,
\begin{align}\label{noncomb}
\sqrt{k}\left(\frac{1}{\hat\alpha^{MLE}_k}-\frac{1}{\alpha}\right)
&\stackrel{d}{\to} 
\mathcal{N}\left(-{\delta\nu_* \over \alpha_c +\nu_*} C^{-\nu_*/\alpha_c}[D_* (\alpha_c^{-1}+\alpha^{-1})+ {\alpha_2 \over \alpha}(D/\alpha)_*], \frac{1}{p\alpha^2} \right).
\end{align}

\vspace{0.3cm}

In the same manner we can define a version of $\hat\alpha^{P}$ which perturbs at censored data-points and which considers only large claims. We consider as before that $\lambda_i=\lambda$, and, in analogy to the exact Pareto setting, define the two estimators
\begin{align*}
\hat\alpha^{P}_u=\frac{\sum_{i=1}^n(\ee_i+\lambda(1-\ee_i))1\{Z_i>u\}}{\sum_{i=1}^n(\log(Z_i/u)+\lambda(1-\ee_i)/\beta_i)1\{Z_i>u\}},
\end{align*}
and the order statistics version
\begin{align*}
\hat\alpha^{P}_k
&=\frac{\sum_{i=1}^k(\ee^{(i)}+\lambda (1-\ee^{(i)}))}{\sum_{i=1}^k\left(\log\left(\frac{Z^{(i)}}{Z^{(k)}}\right)+\lambda (1-\ee^{(i)})/\beta^{(i)}\right)}.
\end{align*}

%Here we assume that the limits
%\begin{align}
%\overline \omega=\lim_{k/n\downarrow 0,\:k\uparrow \infty}\frac 1 k\sum_{i=1}^k\omega^{(i)}, \quad\overline \omega_2=\lim_{k/n\downarrow 0,\:k\uparrow \infty}\frac 1 k\sum_{i=1}^k(\omega^{(i)})^2, \quad\overline \omega_3=\lim_{k/n\downarrow 0,\:k\uparrow \infty}\frac 1 k\sum_{i=1}^k \left({i \over k}\right)^{\nu_*/\alpha_c}\omega^{(i)},
%\label{C6}\\
%\overline b=\lim_{k/n\downarrow 0,\:k\uparrow \infty}\frac 1 k\sum_{i=1}^k\frac{\omega^{(i)}}{\beta^{(i)}}, \quad\overline b_2=\lim_{k/n\downarrow 0,\:k\uparrow \infty}\frac 1 k\sum_{i=1}^k\left(\frac{\omega^{(i)}}{\beta^{(i)}}\right)^2, 
%\quad\overline b_3=\lim_{k/n\downarrow 0,\:k\uparrow \infty}\frac 1 k\sum_{i=1}^k\left({i \over k}\right)^{\nu_*/\alpha_c}\frac{\omega^{(i)}}{\beta^{(i)}},
%\label{C7}
%\end{align}

%\begin{align}
%\overline \omega=1, \quad\overline \omega_2=1, \quad\overline \omega_3=\lim_{k/n\downarrow 0,\:k\uparrow \infty}\frac 1 k\sum_{i=1}^k \left({i \over k}\right)^{\nu_*/\alpha_c}=\int_0^1x^{\nu_\ast/\alpha_c}\dd x=\frac{1}{1+\nu_\ast/\alpha_c},
%\label{C6}\\
%\overline b=1/\beta, \quad\overline b_2=1/\beta^2 
%\quad\overline b_3=\frac{1/\beta}{1+\nu_\ast/\alpha_c},
%\label{C7}
%\end{align}
%%exist. Moreover let
%% $ r_1=(1-p)(1-\lambda \overline \omega )$ and  $r_2=(1-p)\overline b$.
%exist. Moreover let
\begin{theorem}\label{biasvar}
Assume \eqref{HW}.
Set $\lambda_i=\lambda\ge 0$, $\beta_i=\beta>0$. As $\sqrt{k}(k/n)^{\nu_*/\alpha_c} \to \delta$, as $k,n \to \infty$ with $k/n \to 0$,  
\[
\sqrt{k} \left(
{1\over \hat\alpha^{P}_k} - \frac{\lambda\alpha_2/\beta+1}{\lambda  \alpha_2+\alpha}\right)
\] is asymptotically normal with asymptotic mean
%\begin{eqnarray}
%\mathcal{M}&=&- 
% {\delta \nu_* C^{-\nu_*/\alpha_c}\over 1-r_1}
% \left({D_* \over \alpha_c (\nu_* +\alpha_c)}+
%  \lambda p(1-p) (D/\alpha)_* \overline{b}_3 \right. \nonumber\\
%  && \left. \hspace{2.8cm}+
%  {\lambda r_2 + \alpha_c^{-1} \over 1-r_1}p(1-p) (D/\alpha)_*\left( {\alpha_c \over \alpha_c+\nu_*}-\overline{\omega}_3\right)
% \right) \nonumber
%\end{eqnarray}

\begin{eqnarray}
\mathcal{M}&=&- 
 {\delta \nu_* C^{-\nu_*/\alpha_c}\over 1-r_1}
 \left(\frac{D_*/\alpha_c  +
  \lambda p(1-p) (D/\alpha)_* \alpha_c/\beta}{\nu_* +\alpha_c}\right. \nonumber\\
  && \left. \hspace{2.8cm}+
  {\lambda r_2 + \alpha_c^{-1} \over 1-r_1}p(1-p) (D/\alpha)_*\left( {\alpha_c \over \alpha_c+\nu_*}\right)
 \right) \nonumber
\end{eqnarray}

\noindent and variance
%\begin{align}\label{variance}
%\mathcal{V}=\frac{1+\lambda^2\overline b_2\alpha\alpha_2+(1+\lambda\overline b \alpha_2)^2\alpha\alpha_2(1-2\lambda\overline \omega+\lambda^2\overline\omega_2)}{(\alpha+\lambda \overline \omega \alpha_2)^2},
%\end{align}
%and the asymptotic bias of $1/\hat{\alpha}^P_k$ equals 
%\begin{align}\label{bias}
%\mathcal{B}=\frac{\lambda \overline b \alpha_2+1}{\lambda \overline \omega \alpha_2+\alpha}-{1 \over \alpha}+O\left((k/n)^{\nu_*/\alpha_c}\right)
%\end{align}
\begin{align}\label{variance}
\mathcal{V}=\frac{1}{\alpha_c^2(1-r_1)^2}+\frac{1}{(1-r_1)^4}\left(\frac{\lambda}{\beta(1-\lambda)}+\frac{1}{\alpha_c}\right)^2(1-\lambda)^2p(1-p),
\end{align}
where $ r_1=(1-p)(1-\lambda)$ and  $r_2=(1-p)/\beta$. The asymptotic bias of $1/\hat{\alpha}^P_k$ equals 
\begin{align}\label{bias}
\mathcal{B}=\frac{\lambda \alpha_2/\beta+1}{\lambda  \alpha_2+\alpha}-{1 \over \alpha}+O\left((k/n)^{\nu_*/\alpha_c}\right)
\end{align}
as $k,n \to \infty$ and $k/n \to 0$.
\end{theorem}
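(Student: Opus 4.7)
The plan is to apply the multivariate delta method to the representation
\begin{equation*}
\frac{1}{\hat\alpha^P_k}\;=\;g(H_k,\hat p_k),\qquad g(h,q)\;:=\;\frac{h+(\lambda/\beta)(1-q)}{q+\lambda(1-q)},
\end{equation*}
obtained by dividing numerator and denominator of $\hat\alpha^P_k$ by $k$ and using $\beta_i=\beta$. The required joint limit is already in hand from the excerpt: by the asymptotic independence of $H_k$ and $\hat p_k$ established in \cite{einmahl2008statistics}, together with \eqref{y0} and the displayed $\hat p_k$-normality following \eqref{biaspU}, one has under \eqref{HW} and $\sqrt{k}(k/n)^{\nu_*/\alpha_c}\to\delta$ that
\begin{equation*}
\sqrt{k}\bigl((H_k,\hat p_k)-(1/\alpha_c,p)\bigr)\;\stackrel{d}{\to}\;\mathcal{N}\!\bigl((\mu_H,\mu_p),\,\mathrm{diag}(\alpha_c^{-2},p(1-p))\bigr),
\end{equation*}
with $\mu_H=-C^{-\nu_*/\alpha_c}D_*\nu_*\delta/[\alpha_c(\alpha_c+\nu_*)]$ and $\mu_p=p(1-p)(D/\alpha)_*\alpha_c\nu_*\delta C^{-\nu_*/\alpha_c}/(\alpha_c+\nu_*)$.

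First I would verify the centring: using $p=\alpha/\alpha_c$ and $1-p=\alpha_2/\alpha_c$, the value $g(1/\alpha_c,p)$ simplifies to $(\lambda\alpha_2/\beta+1)/(\lambda\alpha_2+\alpha)$, matching the constant stated in the theorem. A short differentiation of $g$, noting that the denominator $q+\lambda(1-q)$ evaluates to $1-r_1$ at $q=p$, gives
\begin{equation*}
\partial_h g(1/\alpha_c,p)=\frac{1}{1-r_1},\qquad \partial_q g(1/\alpha_c,p)=-\frac{\lambda/\beta+(1-\lambda)/\alpha_c}{(1-r_1)^2}.
\end{equation*}
The multivariate delta method then delivers asymptotic normality of $\sqrt{k}(1/\hat\alpha^P_k-g(1/\alpha_c,p))$ with mean $\partial_h g\cdot\mu_H+\partial_q g\cdot\mu_p$ and variance $(\partial_h g)^2\alpha_c^{-2}+(\partial_q g)^2 p(1-p)$.

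What is left is algebraic bookkeeping. For the variance, pulling a factor $(1-\lambda)$ out of $\lambda/\beta+(1-\lambda)/\alpha_c=(1-\lambda)[\lambda/(\beta(1-\lambda))+1/\alpha_c]$ converts $(\partial_q g)^2 p(1-p)$ into the second summand of \eqref{variance}; the first summand is $(\partial_h g)^2\alpha_c^{-2}$ verbatim. For the mean, I would substitute the partials and then split the $\mu_p$-contribution using $r_2=(1-p)/\beta$ and $1-p=\alpha_2/\alpha_c$, so as to expose the two pieces $\lambda p(1-p)(D/\alpha)_*\alpha_c/\beta$ and $(\lambda r_2+\alpha_c^{-1})/(1-r_1)\cdot p(1-p)(D/\alpha)_*\alpha_c$ appearing in the statement of $\mathcal{M}$. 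This rearrangement, rather than any analytic content, is the main obstacle; it is purely mechanical but somewhat delicate because the intermediate identity $\lambda/\beta+(1-\lambda)/\alpha_c = (1-r_1)\lambda/\beta \cdot \text{(rewriting)} + (\lambda r_2+\alpha_c^{-1})$ has to be used with care.

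Finally, the bias formula \eqref{bias} concerns not the mean of the limiting normal but the deterministic gap: since $1/\hat\alpha^P_k$ converges in probability to $g(1/\alpha_c,p)$, the leading bias of $1/\hat\alpha^P_k$ relative to the target $1/\alpha$ equals $g(1/\alpha_c,p)-1/\alpha$, while the second-order correction inherits the rate of $\mu_H/\sqrt{k}$ and $\mu_p/\sqrt{k}$, both of order $(k/n)^{\nu_*/\alpha_c}$ under the assumption $\sqrt{k}(k/n)^{\nu_*/\alpha_c}\to\delta$.
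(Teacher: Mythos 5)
Your delta-method route and the paper's proof are essentially the same argument in lightly different notation. The paper writes $1/\hat\alpha^P_k=(H_k+\lambda W_k)/(1-V_k)$ with $V_k=(1-\lambda)(1-\hat p_k)$ and $W_k=(1-\hat p_k)/\beta$, establishes joint asymptotics for $(H_k,V_k,W_k)$ via the substitution trick of \cite{einmahl2008statistics}, and then observes $Y_2=Y_1/[\beta(1-\lambda)]$ since $V_k$ and $W_k$ are both deterministic functions of $\hat p_k$; this collapses the problem back to a linear combination of two independent Gaussians $Y_0$ and $Y_1$. That is exactly the bivariate delta method applied to $g(H_k,\hat p_k)$ that you propose, just packaged differently. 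Your centring $g(1/\alpha_c,p)=(\lambda\alpha_2/\beta+1)/(\lambda\alpha_2+\alpha)$, the gradient $\bigl(\partial_h g,\partial_q g\bigr)\big|_{(1/\alpha_c,p)}=\bigl(1/(1-r_1),-[\lambda/\beta+(1-\lambda)/\alpha_c]/(1-r_1)^2\bigr)$, and the resulting variance are all correct and agree with the paper, and your bias remark (the leading term is the deterministic gap $g(1/\alpha_c,p)-1/\alpha$ with an $O\bigl((k/n)^{\nu_*/\alpha_c}\bigr)$ correction from $\mu_H,\mu_p$) is fine.

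One caution, though, on the piece you yourself flagged as "delicate": the intermediate identity you invoke to match the stated $\mathcal{M}$ does not actually hold. Working it out,
\begin{equation*}
(1-r_1)\frac{\lambda}{\beta}+\lambda r_2+\frac{1}{\alpha_c}=\frac{\lambda}{\beta}+\frac{\lambda^2(1-p)}{\beta}+\frac{1}{\alpha_c},
\end{equation*}
whereas $\lambda/\beta+(1-\lambda)/\alpha_c=\lambda/\beta+1/\alpha_c-\lambda/\alpha_c$, and these differ unless $\lambda=0$. What the delta method (and, equally, the paper's own proof, where $E[Y_1]$ carries an explicit factor $(1-\lambda)$) produces is
\begin{equation*}
\mathcal{M}=-\frac{\delta\nu_* C^{-\nu_*/\alpha_c}}{1-r_1}\left(\frac{D_*/\alpha_c+\lambda p(1-p)(D/\alpha)_*\alpha_c/\beta}{\nu_*+\alpha_c}+\frac{(\lambda r_2+\alpha_c^{-1})(1-\lambda)}{1-r_1}\,p(1-p)(D/\alpha)_*\frac{\alpha_c}{\alpha_c+\nu_*}\right),
\end{equation*}
i.e.\ there is an extra factor $(1-\lambda)$ in the second summand that is absent from the displayed $\mathcal{M}$ in the theorem statement. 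So your "mechanical but delicate" bookkeeping would in fact not reconcile with the printed formula; the derivation is right but the stated $\mathcal{M}$ appears to have dropped a $(1-\lambda)$. (Sanity check: with $\lambda=1$ and $\delta=0$ everything vanishes either way, and the variance in Corollary \ref{lambdaone} is recovered from your $\mathcal{V}$ after rewriting $(1-\lambda)^2(\lambda/[\beta(1-\lambda)]+1/\alpha_c)^2=(\lambda/\beta+(1-\lambda)/\alpha_c)^2$ and letting $\lambda\to1$, $\beta=\alpha$.)
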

\begin{proof}
See Appendix A.
\end{proof}

\begin{remark}\normalfont
Notice that estimates of $\alpha_2$ or $\alpha_c$ are available using basic survival analysis techniques, cf. \eqref{randcens}. Consequently, we can use the plug-in method for the estimation of any of the above formulas that involve these quantities.
\end{remark}
\begin{remark}\normalfont
As a sanity check, observe that in Theorem \ref{biasvar}, whenever $\beta=\alpha$ and $\delta=0$, the bias vanishes.
\end{remark}

\vspace{0.2cm}\noindent
In the same spirit, even more can be said:
\begin{corollary}\label{lambdaone}
(Combination) Assume the conditions of Theorem \ref{biasvar}, and further $\delta=0$, with $\beta=\alpha$. Then the estimator $\hat\alpha^{P}_k$ with $\lambda=1$ is unbiased and can be written as
\begin{align*}
\hat\alpha^{P}_k=\left(\frac{\sum_{i=1}^k \ee^{(i)}}{k}\cdot \frac{\sum_{i=1}^k\log(Z^{(i)}/Z^{(k+1)})}{\sum_{i=1}^k \ee^{(i)}}+\frac{\sum_{i=1}^k(1-\ee^{(i)})}{k}\cdot \beta^{-1}\right)^{-1}.
\end{align*}
In words, $1/\hat\alpha^{P}_k$ is the weighted average of the MLE estimator and the expert information, the weights being the proportion of non-censored (and censored, respectively) observations above the threshold $T^{(k)}$.
Moreover, its inverse has asymptotic variance (and hence mean square error) given by
\begin{align*}
\operatorname{Var}(1/\hat\alpha_k^P)=\frac{1}{k p(\alpha+\alpha_2)^2},
\end{align*}
which, when compared to \eqref{noncomb}, is seen to enhance estimation.
\end{corollary}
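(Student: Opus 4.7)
My plan is to treat Corollary \ref{lambdaone} as a direct specialization of Theorem \ref{biasvar}, with three essentially independent pieces to verify: the closed-form rewriting of $\hat\alpha_k^P$, the vanishing of the asymptotic bias, and the simplification of the asymptotic variance. No new probabilistic input is needed beyond what Theorem \ref{biasvar} already provides.

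First I would establish the combination formula. After substituting $\lambda=1$ and $\beta^{(i)}=\beta$ into the definition of $\hat\alpha_k^P$, the numerator collapses to $\sum_{i=1}^{k}\bigl(\ee^{(i)}+(1-\ee^{(i)})\bigr)=k$; inverting, dividing by $k$, and splitting the log-spacing contribution into its censored and non-censored parts produces exactly the stated representation of $1/\hat\alpha_k^P$ as a two-term convex combination of the censored-Hill estimator and $\beta^{-1}$, with weights $k^{-1}\sum \ee^{(i)}$ and $k^{-1}\sum(1-\ee^{(i)})$ respectively.

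Second, I would read off the unbiasedness by plugging $\lambda=1$, $\beta=\alpha$, and $\delta=0$ into the bias expression \eqref{bias}. The first term simplifies via $\alpha_2/\alpha+1=(\alpha_2+\alpha)/\alpha$ to $1/\alpha$, exactly cancelling the $-1/\alpha$, while the $O\bigl((k/n)^{\nu_*/\alpha_c}\bigr)$ remainder drops because $\delta=0$.

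Third, and here I would be most careful, I would compute the asymptotic variance by substituting into \eqref{variance}. The main book-keeping obstacle is that the term $\lambda/[\beta(1-\lambda)]$ in \eqref{variance} is singular at $\lambda=1$; however it is paired with the factor $(1-\lambda)^{2}$, and combining fractions gives $\bigl(\lambda/\beta+(1-\lambda)/\alpha_c\bigr)^{2}$, which at $\lambda=1$ evaluates cleanly to $1/\beta^{2}$. With $r_1=(1-p)(1-\lambda)=0$ and $\beta=\alpha$, the expression reduces to $\alpha_c^{-2}+p(1-p)/\alpha^{2}$, and the identities $p=\alpha/\alpha_c$, $1-p=\alpha_2/\alpha_c$ collapse it to $1/(\alpha\alpha_c)$; after the $\sqrt{k}$ normalisation this is the claimed $1/[kp(\alpha+\alpha_2)^{2}]$. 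Comparing with $1/(kp\alpha^{2})$ in \eqref{noncomb} exhibits a strict improvement by the factor $(\alpha/\alpha_c)^{2}=p^{2}$, confirming the advertised enhancement.
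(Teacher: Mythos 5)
Your proposal is correct and takes the same approach the paper implicitly intends (the paper simply declares the proof ``immediate''): substitute $\lambda=1$, $\beta=\alpha$, $\delta=0$ into the definition of $\hat\alpha^P_k$ and into the bias and variance formulas of Theorem~\ref{biasvar}. In particular you correctly notice and resolve the removable singularity in \eqref{variance} at $\lambda=1$ by rewriting $\bigl(\lambda/[\beta(1-\lambda)]+1/\alpha_c\bigr)^2(1-\lambda)^2=\bigl(\lambda/\beta+(1-\lambda)/\alpha_c\bigr)^2$ before evaluating, and your reduction $\mathcal V=1/(\alpha\alpha_c)=p/\alpha^2$ and the subsequent comparison factor $p^2$ against \eqref{noncomb} are right. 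The only minor imprecision is the phrase ``splitting the log-spacing contribution into its censored and non-censored parts'': the log-spacing sum is not split, it is simply multiplied and divided by $\sum_{i=1}^k\ee^{(i)}$ to make $\hat\alpha^{MLE}_k$ appear, while the separate term $\frac1k\sum(1-\ee^{(i)})/\beta$ supplies the expert piece; the algebra you carry out is nonetheless exactly right.
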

The proof of Corollary \ref{lambdaone} is immediate.

\begin{remark}\normalfont

Observe that in a Bayesian setting, whenever we are aware that a parameter lies within an interval, a natural estimator is constructed as follows. We set a uniform prior on  $[b_1,b_2]$ and together with the Pareto likelihood we use the posterior mean as an estimate. Such a mean is given by

\begin{align*}
&\frac{\int_{b_1}^{b_2}\alpha^{1+\sum_{i=1}^n\ee_i}e^{-\alpha\sum_{i=1}^n\log\left(\frac{Z_1}{x_0}\right)} \dd\alpha}{\int_{b_1}^{b_2}\alpha^{\sum_{i=1}^n\ee_i}e^{-\alpha\sum_{i=1}^n\log\left(\frac{Z_1}{x_0}\right)} \dd\alpha}\\
=&\frac{\sum_{i=1}^n\ee_i+1}{\sum_{i=1}^n\log \left(\frac{Z_1}{x_0}\right)}\nonumber\\
 &\times\left[\frac{\gamma(\sum_{i=1}^n\ee_i+2,b_2\sum_{i=1}^n\log\left(\frac{Z_1}{x_0}\right))-\gamma(\sum_{i=1}^n\ee_i+2,b_1\sum_{i=1}^n\log\left(\frac{Z_1}{x_0}\right))}{\gamma(\sum_{i=1}^n\ee_i+1,b_2\sum_{i=1}^n\log\left(\frac{Z_1}{x_0}\right))-\gamma(\sum_{i=1}^n\ee_i+1,b_1\sum_{i=1}^n\log\left(\frac{Z_1}{x_0}\right))}\right]\nonumber,
\end{align*}
where
\begin{align*}
\gamma(u,v)=\frac{\int_0^vt^{u-1}e^{-t}\dd t}{\Gamma(u)}
\end{align*}
is the (normalized) lower incomplete gamma function. One can go one step further and define the order statistics version of the above estimator. However, despite being theoretically neat, the latter estimator is numerically unstable for both large ($k>100$) and small ($k<5$) number of upper order statistics, and hence we will not pursue it in the simulation section.

%We can also define a version of $\hat\alpha^{I}$ for large claims, which for $\delta_i=1\{t_i>u\}$ is given by
%\begin{align*}
%&\hat\alpha^{I}_u=\\
%&\frac{ \sum_{1}^n\delta_i(\ee_i-\lambda_i(1-\ee_i))+\sqrt{ \left[\sum_{1}^n\delta_i(\ee_i-\lambda_i(1-\ee_i))\right]^2+4\sum_{1}^n\delta_i\log(T_i/u)\cdot\sum_{1}^n\delta_i\beta_i\lambda_i(1-\ee_i)}  }{2 \sum_{1}^n\delta_i\log(T_i/u)}\nonumber.
%\end{align*}
%This estimator also proved to be somewhat unstable in practice

%\begin{align}
%&\hat\alpha^B_u=\frac{\sum_{i=1}^n\delta_i\ee_i+1}{\sum_{i=1}^n\delta_i\log(T_i/u)}\times\\
%&\times \left[\frac{\gamma(\sum_{i=1}^n\delta_i\ee_i+2,b_2\sum_{i=1}^n\delta_i\log(T_i/u))-\gamma(\sum_{i=1}^n\delta_i\ee_i+2,b_1\sum_{i=1}^n\delta_i\log(T_i/u))}{\gamma(\sum_{i=1}^n\delta_i\ee_i+1,b_2\sum_{i=1}^n\delta_i\log(T_i/u))-\gamma(\sum_{i=1}^n\delta_i\ee_i+1,b_1\sum_{i=1}^n\delta_i\log(T_i/u))}\right],\nonumber
%\end{align}
%and their respective order-statistics versions. 
\end{remark}

\begin{remark}\normalfont
In \cite{ameraoui2016bayesian}, several Bayesian approaches for heavy tail estimation were considered (see also \cite{beirlant2018penalized}) under the random censoring assumption. We will use two of them as a benchmark. The first one arises from the posterior mean of a Pareto likelihood and the conjugate Gamma($a,b$) prior:
\begin{align}\label{bayesgamma}
\hat\alpha^{BG}=\frac{a+\sum_{i=1}^k \ee^{(i)}}{b+\sum_{i=1}^k\log(Z^{(i)}/Z^{(k+1)})}.
\end{align}
In the presence of a single expert estimate $\beta$ of the tail index, the prior parameters can be tuned by moment matching, where the variance will need to be imposed subjectively. That is, solve $\beta=a/b$ and $\sigma^2=a/b^2$ for an expert opinion on $\sigma^2$.

The second one arises from the maximal data information prior, and leads to the estimator
\begin{align}\label{bayesmaximal}
\hat\alpha^{BM}=\frac{1+\sum_{i=1}^k \ee^{(i)}+\sqrt{(1+\sum_{i=1}^k \ee^{(i)})^2+4\sum_{i=1}^k\log(Z^{(i)}/Z^{(k+1)})}}{2 \sum_{i=1}^k\log(Z^{(i)}/Z^{(k+1)})}.
\end{align}
Notice that the latter does not admit tuning the prior to additional data.

\end{remark}

\subsection*{Quantile estimation}
With the last result at hand it is natural to propose a quantile estimator based on the approach taken in \cite{weissman1978estimation}. Recall that we denote the quantile function of a regularly varying tail by $Q(p)$. Exploiting the fact that
\begin{align}\label{weismannasympt}
\frac{Q(1-p)}{Q(1-k/n)}\sim\frac{p^{-1/\alpha}}{(k/n)^{-1/\alpha}}=\left(\frac{k}{np}\right)^{1/\alpha}, \quad p\downarrow 0, \:k/n \to 0,\: np=o(k),
\end{align}
the Weissman estimator based on $k$ order statistics  (and without expert information) arises naturally as
\begin{align}
\hat Q^{MLE}_k(1-p)=\hat Q^{KM}(1-k/n)\cdot\left(\frac{k}{np}\right)^{1/\hat\alpha^{MLE}_k},
\end{align}
where $\hat Q^{KM}$ is the quantile function derived from the Kaplan-Meier estimator $$\widehat {S}(z)=\prod \limits _{i:\ Z_{i}\leq z}\left(1-{\frac {d_{i}}{n_{i}}}\right),$$ for the survival curve of the censored dataset in question, $(Z_i,\ee_i)$, $i=1,\dots,n$, where the $Z_i$ are payments (which would correspond to times in classical survival analysis terminology). Here, $d_i$ is the number of closed claims of a given size $z$, and $n_i$ is the number of payments, which irrespectively of censoring, are above $z$. In the case of no censoring this reduces to the empirical quantiles of the dataset, since the Kaplan-Meier curve is then just the empirical distribution function. Similarly, in the case of pure expert information an estimator can be proposed as
\begin{align}
\hat Q^{EX}_k(1-p)=\hat Q^{EX}(1-k/n)\cdot\left(\frac{k}{np}\right)^{1/\beta},
\end{align}
where $\hat Q^{EX}$ is either an expert-given cumulative distribution function, or in absence of it, simply the Kaplan-Meier quantiles. To combine these two results, Theorem \ref{lambdaone} leads the way. For the choice $\lambda=1$, we see that the Pareto part of the tail splits for the perturbed estimator according to
\begin{align*}
\left(\frac{k}{np}\right)^{1/\hat\alpha^P}=\left(\frac{k}{np}\right)^{\hat p_k/\hat\alpha^{MLE}_k}\cdot\left(\frac{k}{np}\right)^{(1-\hat p_k)/\beta}, 
\end{align*}
where 
\begin{align*}
\hat p_k=\frac1k \sum_{i=1}^k \ee^{(i)},
\end{align*}
and hence the following estimator is proposed for the overall tail
\begin{align}\label{perturbed_quantile}
\hat Q^{P}_k(1-p)&=\left[\hat Q^{KM}(1-k/n)\cdot\left(\frac{k}{np}\right)^{1/\hat\alpha^{MLE}_k}\right]^{\hat p_k}\cdot \left[\hat Q^{EX}(1-k/n)\cdot\left(\frac{k}{np}\right)^{1/\beta}\right]^{1-\hat p_k}\\
&=\hat Q^{P}(1-k/n)\cdot\left(\frac{k}{np}\right)^{1/\hat\alpha^P_k},
\end{align}
where 
\begin{align*}
\hat Q^{P}(1-k/n)=(\hat Q^{KM}(1-k/n))^{\hat p_k}(\hat Q^{EX}(1-k/n))^{1-\hat p_k}.
\end{align*}
Observe that in the absence of expert information for the quantile function, we merely have
\begin{align*}
\hat Q^{P}_k(1-p)=\hat Q^{KM}(1-k/n)\cdot\left(\frac{k}{np}\right)^{1/\hat\alpha^P_k}.
\end{align*}

\section{Simulation Study and MTPL Insurance}\label{MTPL_sect}

We perform in this section a simulation study and apply our method to a motor third party liability insurance dataset (cf. \cite[Sec.1.3.1]{abt}). In order to make our results comparable with existing studies and existing analysis of the aforementioned dataset, we will consider estimation of
$$\xi=\frac 1\alpha,$$
and thus we will make use of the estimators
\begin{align}\label{xiestims}
\hat\xi_k^{MLE}=\frac{1}{\hat\alpha^{MLE}_k},\:\: \hat\xi_k^P=\frac{1}{\hat\alpha^{P}_k},\:\: \hat\xi_k^{BG}=\frac{1}{\hat\alpha^{BG}_k},\:\: \hat\xi_k^{BM}=\frac{1}{\hat\alpha^{BM}_k}.
\end{align}

\subsection{Simulation Study}
We consider three heavy tails belonging to the Hall class \eqref{hallclass}, and compare $\hat\xi_k$ and the quantile estimator
\begin{align*}
\hat Q^{P}_k(1-p)=\hat Q^{KM}(1-k/n)\cdot\left(\frac{k}{np}\right)^{\hat\xi_k},
\end{align*}
where $\hat\xi_k$ is one of the four estimators in \eqref{xiestims}, and for $p=0.005$. For any tail estimator $\hat \xi_k$, we generically refer to $\hat Q^{P}_k$ as the corresponding Weismann estimator, since it was derived using the general principle of equation \eqref{weismannasympt}.

\noindent Concretely, we simulate two independent i.i.d.\ samples of size $n=200$, corresponding to the variables $X_i$ and $L_i$, $i=1,,\dots,n$, in \eqref{hallclass}. We repeat the procedure $N_{{sim}}=1000$ times. The following three distributions are employed, with two sub-cases for each distribution, for varying parameters:

\begin{itemize}
\item The exact Pareto distribution, defined in \eqref{exactpareto}, for $\xi=1,\,1/2$.\\

\item The Burr distribution, with tail given by
\begin{align*}
\overline F(x)=\left(\frac{\eta}{\eta+x^\tau}\right)^\lambda, \;x>0,\quad \eta,\tau,\lambda>0,
\end{align*}
We consider $\eta=1$, $\lambda=2$, $\tau=1/2$; $\eta=2$, $\lambda=1$, $\tau=2$; and $\eta=2$. Notice that $\xi=1/(\lambda\tau)$\\

\item The Fr\'{e}chet distribution with tail 
\begin{align*}
\overline F(x)=1-\exp(-x^{-\alpha}), \quad \alpha>0.
\end{align*}
We consider $\xi=1,\: 1/2$.\\
\end{itemize}

\noindent For the expert information we draw a single random number from a Gaussian distribution centered at the true $\xi$ and with standard deviation $0.2$, and define that value as $1/\beta$. Then, by moment matching, using a variance of $0.04$, we obtain the parameters $a,b$ needed for $\hat\xi^{BG}_k$. Notice that we input the true value of the variance, and hence we are giving additional information to the Bayesian setting, opposed to $\hat\xi_k^P$, where we make no such assumptions and we use the  combination with $\lambda=1$. Additional studies (which we omit here) show that if the Bayesian variance is not correctly specified (for instance, set at $1$ or $0.5$), the Bayesian solution behaves almost identically to the censored Hill estimator $\hat\xi_k^{MLE}$. 

Also notice the misspecification of the Gamma prior in the derivation of  $\hat\xi^{BG}_k$ with respect to the Gaussian distribution from which the expert information is actually simulated. Using a Gaussian prior would not only make explicit posterior formulas not available (and hence the need to resort to MCMC sampling methods as Gibbs sampling), but would add more information than what we have assumed is available throughout the paper (we have not even assumed knowledge of the variance).

We then plot the empirical bias and MSE of each resulting estimator as a function of $k$ (comparing the estimates with respect to the true value).  We write expressions such as Burr($\xi=1$) to indicate that the parameters of the distribution are not the focus, but rather the resulting tail index from the Hall class (which is a function of the parameters). The results are given in Figures \ref{sspareto}, \ref{ssburr} and \ref{ssfrechet}. We observe how the fact that the perturbation will affect estimation based on the proportion of censored observations as opposed to the total amount of data-points performs well for $k>10$. As a result, a substantial amount of bias and MSE is removed. This is especially the case for the heavy tail case $\xi=1$. For the lighter tail $\xi=0.5$, the perturbed estimator has either the best or second best performance bias-wise, and its only major drawback is the MSE for the exact Pareto case for $k>50$, where the Bayesian gamma solution performs even worse. When considering quantiles, the perturbed estimator behaves better than the Hill estimator and on par with the other two benchmarks for the heavy tail exact Pareto case. In the lighter tail case it performs the worst for large order statistics, recovering and behaving as in the previous case for $k<60$. In all other non-exact Pareto tail cases, the perturbation was superior to all methods. 

Notice that one assumption made in this study was that the expert guess was centered and with relatively good quality (mean $\xi$ and standard deviation $0.2$). If the latter conditions are changed, it is easy to construct a simulation study where both the perturbed and the Bayesian gamma solution perform much worse. Consequently, the findings of this simulation study suggest that insurers that are very confident in their expert opinions might benefit from using the combination estimator $\hat\xi_k^P$ with $\lambda=1$. 

\begin{remark}\normalfont
For the adaptive selection of $\lambda$, the procedure of cross-validation may naturally come to mind. However, the latter is based on averages rather than extremes. For instance, in a 10-fold cross-validation, the 9 parts of the data that do not contain the maximum will tend to prefer lighter tail indices, and only the one part with the maximum will suggest a heavy tail index, so that the overall index will be 
underestimated. Correspondingly, cross-validation is not a method of choice in this context.
\end{remark}

\begin{figure}[]
\centering
\includegraphics[width=12cm,trim=0.5cm 0.5cm 0.5cm 0.5cm,clip]{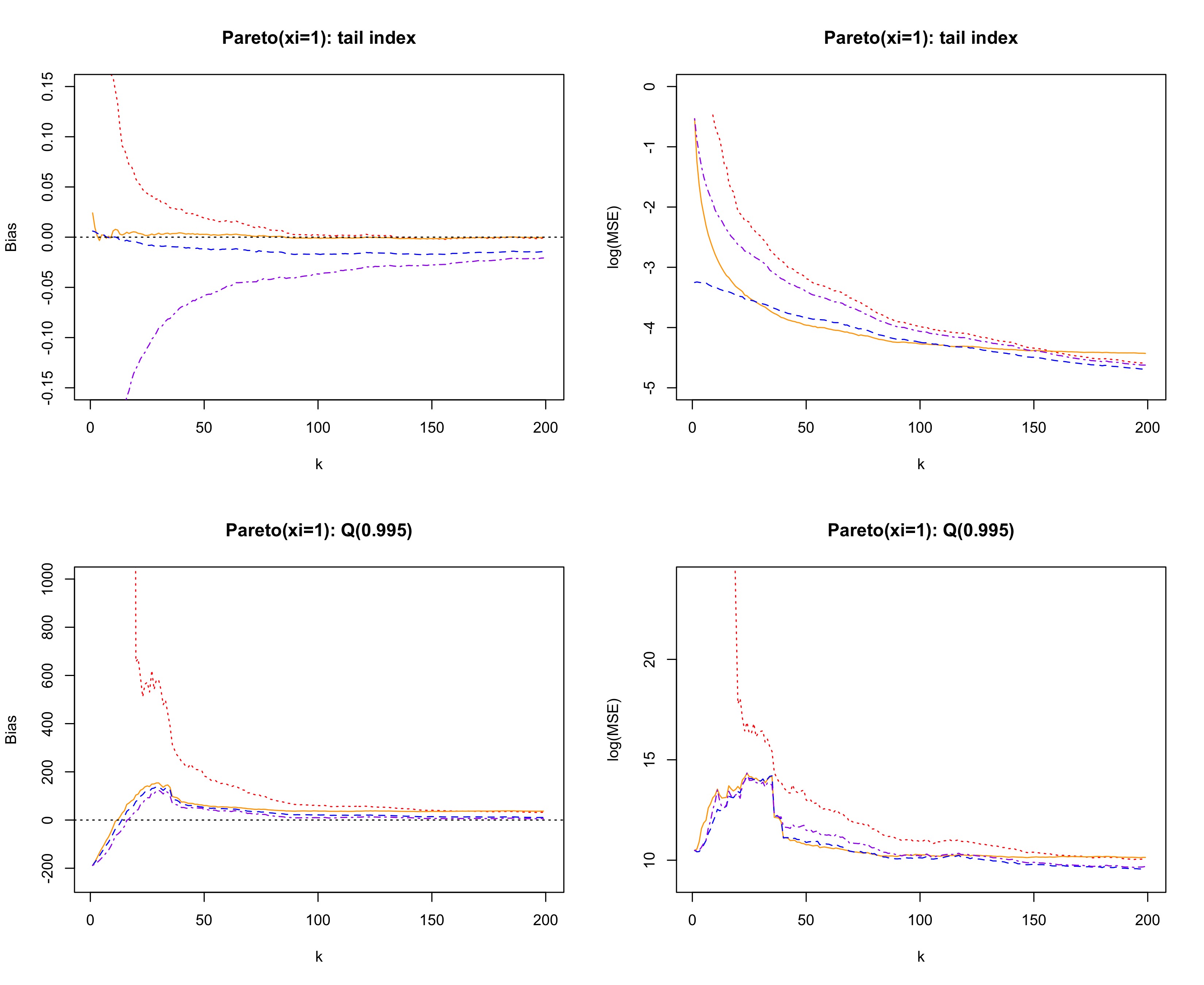}
\includegraphics[width=12cm,trim=0.5cm 0.5cm 0.5cm 0.5cm,clip]{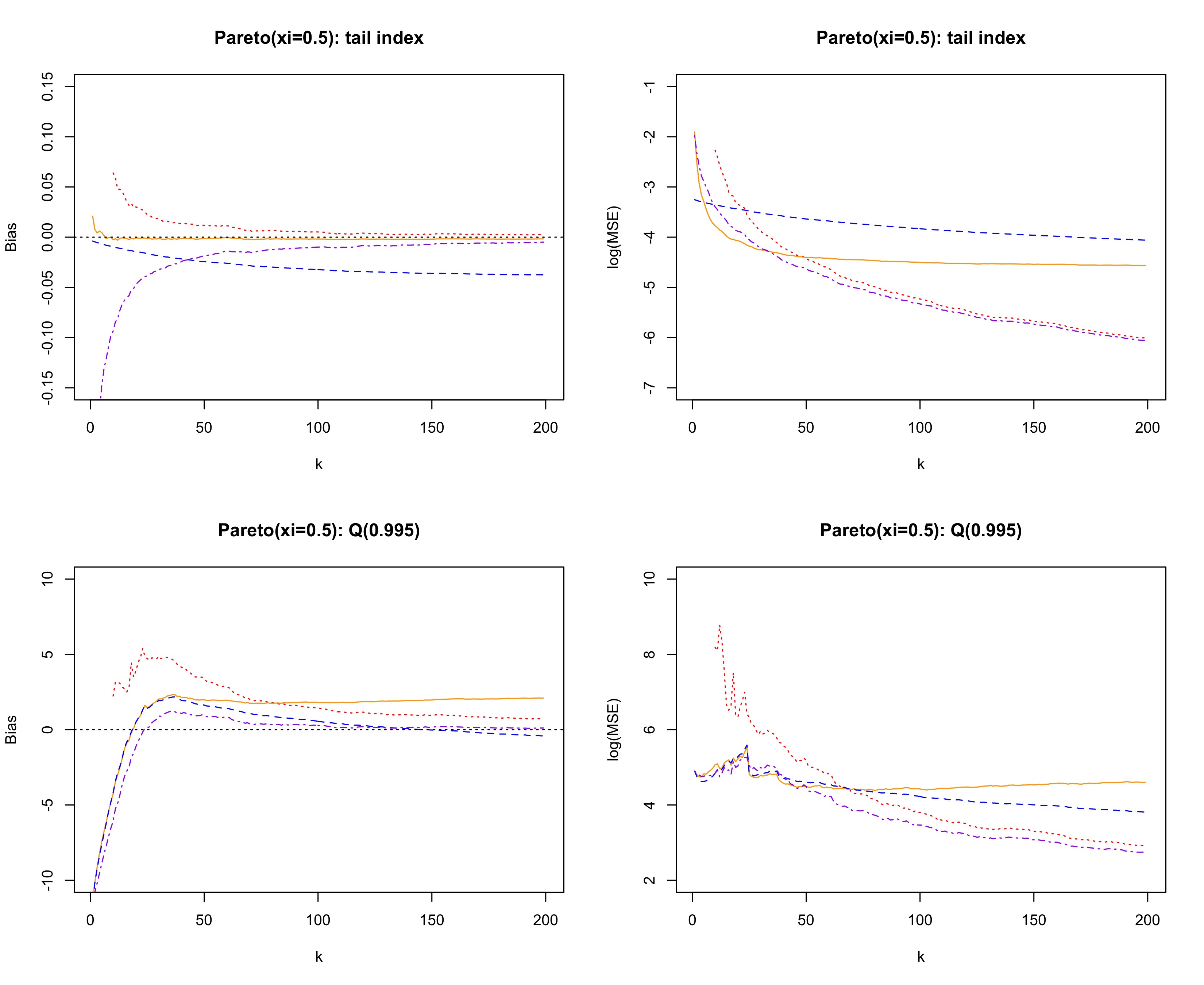}
\caption{Bias and (log) Mean Square Error for the exact Pareto distribution, for varying parameters. We compare $\hat\xi_k^P$ (orange, solid), $\hat\xi_k^{MLE}$ (red, dotted), $\hat\xi_k^{BG}$ (blue, dashed) and $\hat\xi_k^{BM}$ (purple, dashed and dotted), as well as the associated Weissman quantile estimator.} 
\label{sspareto}

\end{figure}
\begin{figure}[]
\centering
\includegraphics[width=12cm,trim=0.5cm 0.5cm 0.5cm 0.5cm,clip]{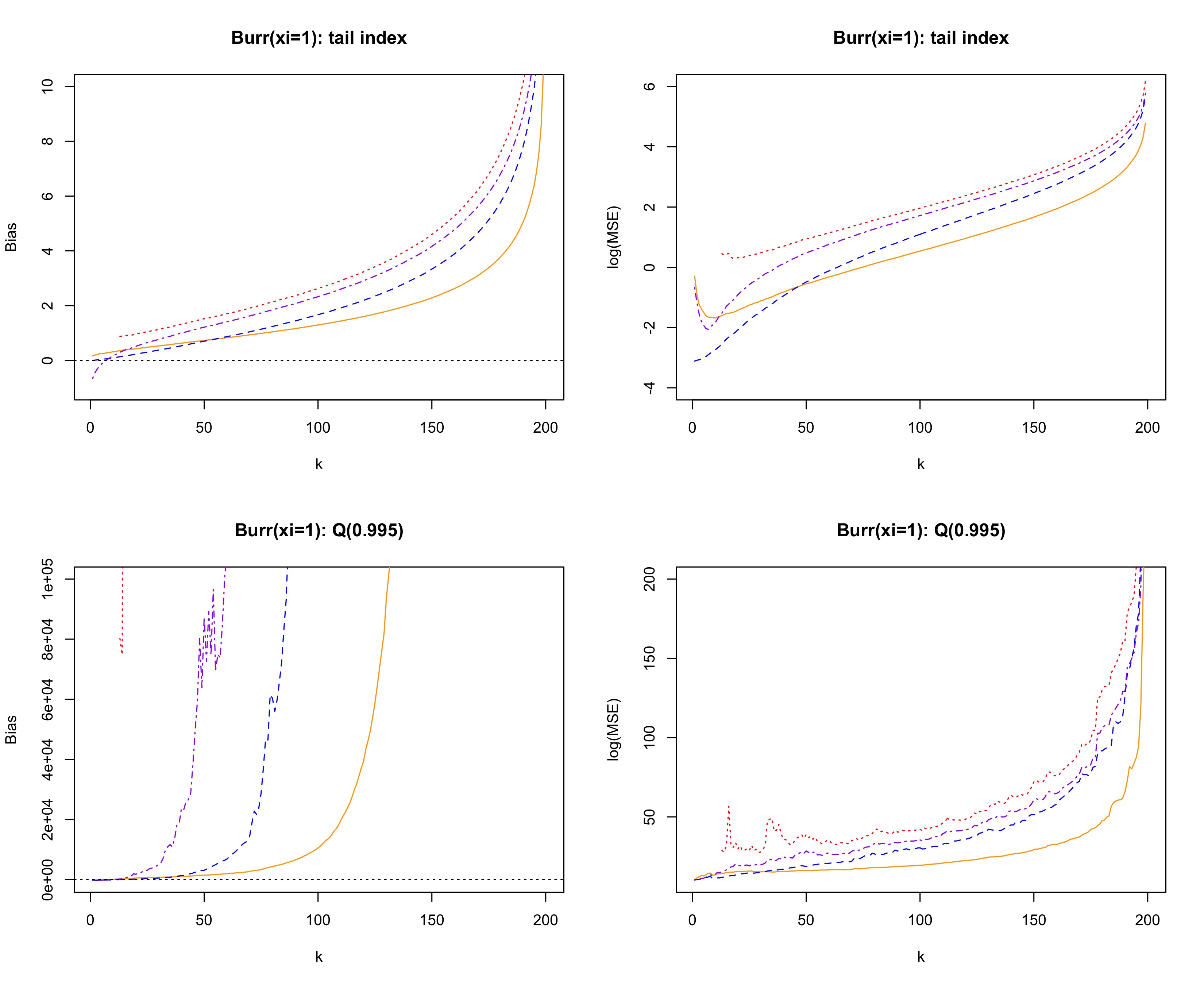}
\includegraphics[width=12cm,trim=0.5cm 0.5cm 0.5cm 0.5cm,clip]{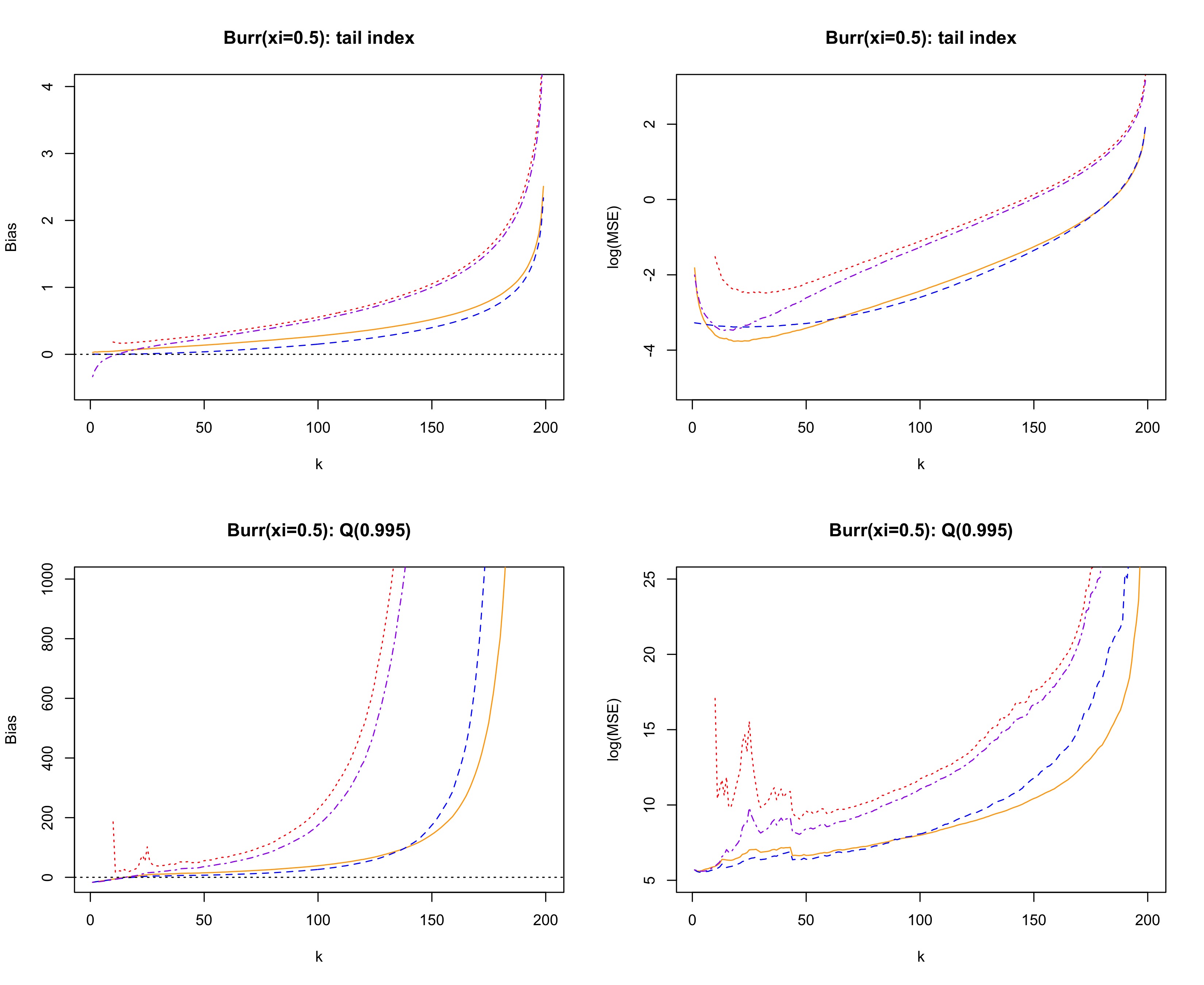}
\caption{Bias and (log) Mean Square Error for the Burr distribution, for varying parameters. We compare $\hat\xi_k^P$ (orange, solid), $\hat\xi_k^{MLE}$ (red, dotted), $\hat\xi_k^{BG}$ (blue, dashed) and $\hat\xi_k^{BM}$ (purple, dashed and dotted), as well as the associated Weissman quantile estimator.} 
\label{ssburr}
\end{figure}

\begin{figure}[]
\centering
\includegraphics[width=12cm,trim=0.5cm 0.5cm 0.5cm 0.5cm,clip]{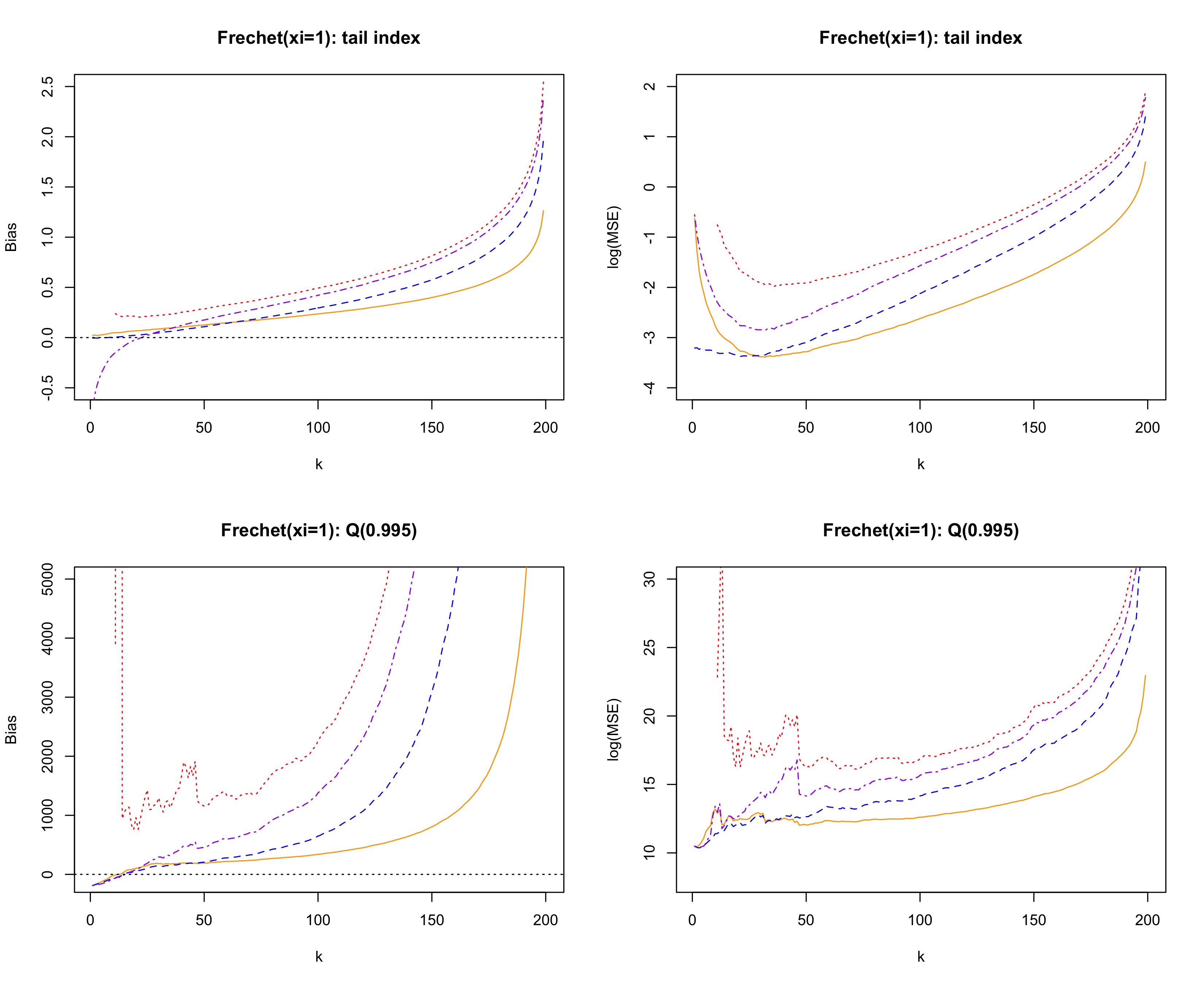}
\includegraphics[width=12cm,trim=0.5cm 0.5cm 0.5cm 0.5cm,clip]{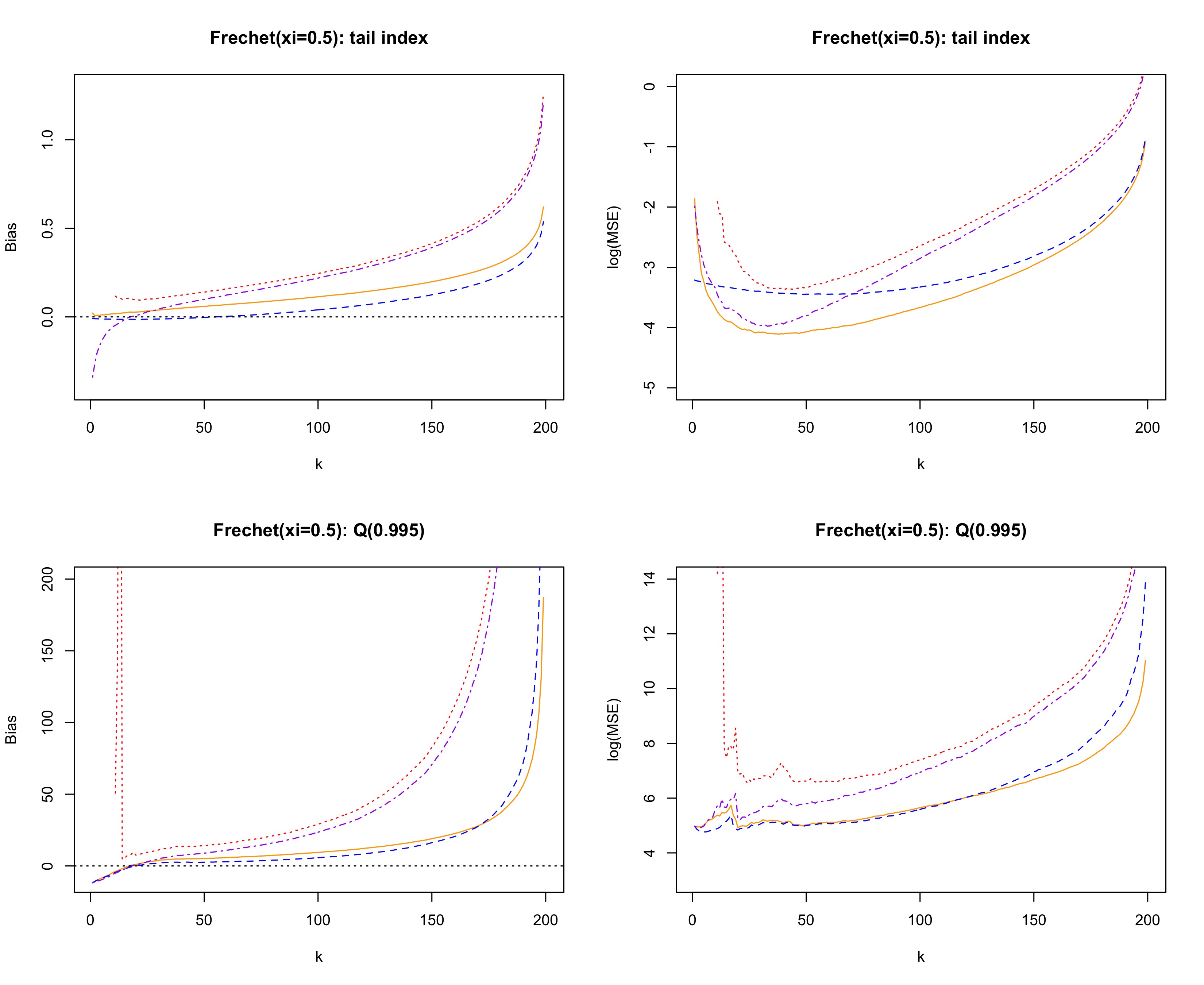}
\caption{Bias and (log) Mean Square Error for the Frechet distribution, for varying parameters. We compare $\hat\xi_k^P$ (orange, solid), $\hat\xi_k^{MLE}$ (red, dotted), $\hat\xi_k^{BG}$ (blue, dashed) and $\hat\xi_k^{BM}$ (purple, dashed and dotted), as well as the associated Weissman quantile estimator.} 
\label{ssfrechet}
\end{figure}

\subsection{Insurance Data}
We consider a dataset from Motor Third Party Liability Insurance (MTPL) from a direct insurance company operating in the EU (cf. \cite[Sec.1.3.1]{abt}), consisting of yearly paid amounts to policyholders during the period 1995-2010. At 2010 we have roughly $60\%$ right-censored (open) observations out of the total 837 claims. 
 The data are reported as soon as the incurred value exceeds the reporting threshold given in Figure 1.2 in \cite{abt}, and the histogram of the IBNR delays is given in Figure 1.3 in \cite{abt}.
%At each year there is an expert guess for each of the censored observations, the incurred value, and
We also have an \textit{ultimate} estimate which is the company's expert estimation of the eventual size of the claim. 
In Figure \ref{Description} we have several descriptive statistics of the data: the log-claim sizes, the Kaplan-Meier estimator of the data (cf. \cite{kaplan58}), the proportion of non-censoring (closed claims) as a function of the order statistics of the claims, and a QQ-plot for the log-claims against theoretical exponential quantiles. We observe that censoring at random is not a far-fetched assumption to make, since a horizontal behaviour of the proportion of closed claims as a function of the number of upper order statistics does not reject the possibility of the sizes of claims and the probabilities of censoring of those claims being independent. The Pareto tail behaviour of large (above 0.45 million, possibly censored) claim sizes seems to hold, based on the QQ-plot of their logarithm against theoretical exponential quantiles. Standard tests also do not reject the exponential hypothesis of the logarithm of these large claims (Kolmogorov-Smirnov p-value of 0.50). 

\begin{figure}[hh]
\centering
\includegraphics[width=13cm,trim=0.5cm 0.5cm 0.5cm 0.5cm,clip]{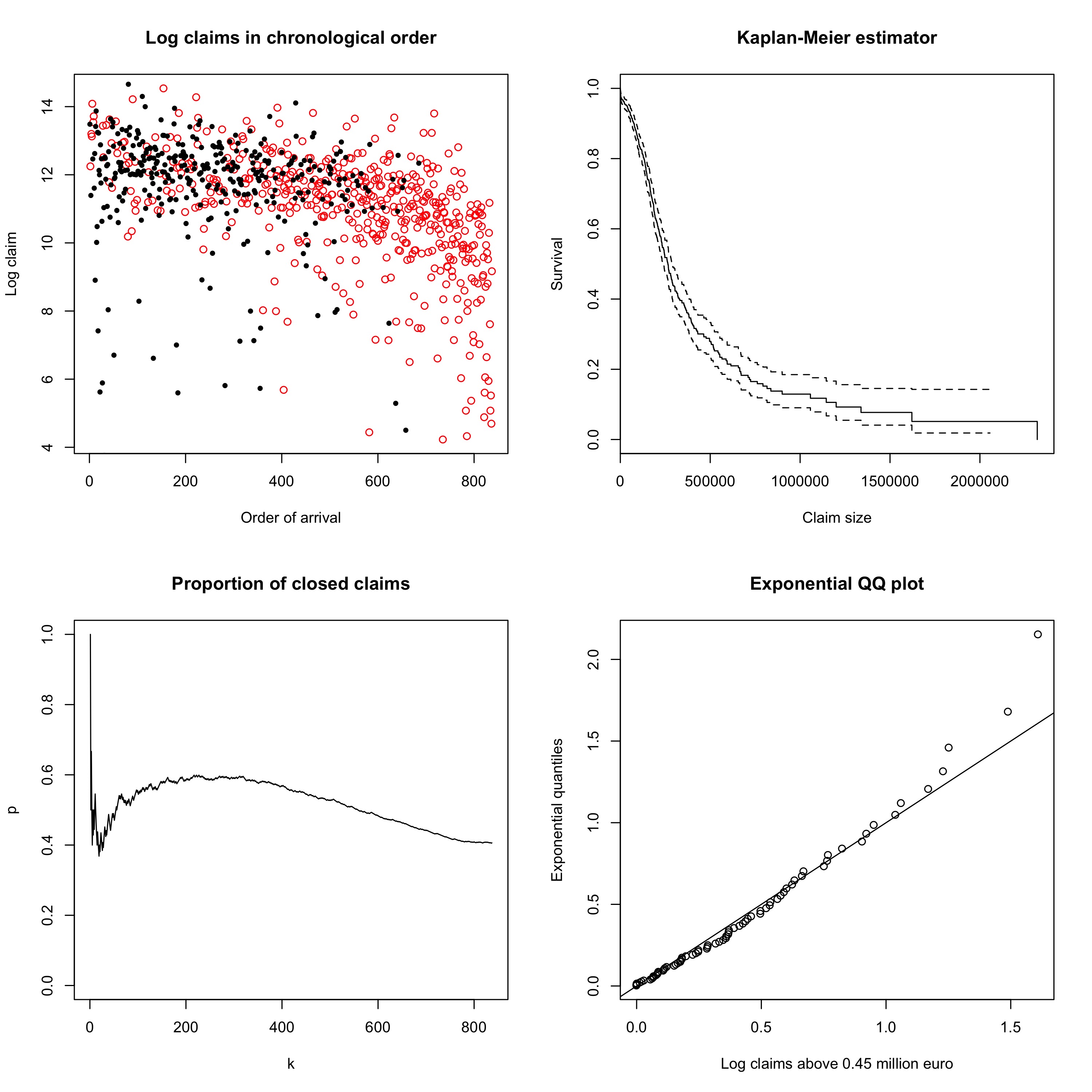}
\caption{Descriptive statistics of the insurance data. Top left: log-claims in order of arrival, showing both open (red, circle) and closed (black, dot) claims. Top right: Kaplan-Meier survival probability estimator for the claims. Bottom left: proportion of closed claims as a function of the top $k$ order statistics of the claim sizes. Bottom right: QQ plot of the logarithm of the claims larger than 0.54 million euro, against the theoretical exponential quantiles with the same mean.} 
\label{Description}
\end{figure}

%For the reduced dataset 
Now we would like to know how the ultimates can help to estimate the tail parameter. In \cite{bladt}, the ultimates of this dataset were explored, where it was observed that they are Pareto in the tail. Furthermore, using developments in threshold selection, using trimming techniques, it was shown that $\xi=0.48$ is a good estimate for the heaviness of the tail of the ultimates. In Figure \ref{Hill_plot} we show the Hill plot for the ultimates, together with the chosen expert $\xi$ value, and the censored Hill $\hat\xi_k^{MLE}$ and the perturbed version $\hat\xi_k^P$ with $\lambda=1$ and $\beta=1/0.48$. Notice that in this case, we know how $\beta$ is obtained, and this additional knowledge could be useful. However, our method does not assume any specific structure, which means that any other method can be used to obtain $\beta$, and we merely give the current one for the sake of example. We observe a particularly stable region when $k$ is between $20$ and $70$, which suggests a heavier tail (roughly $0.65$) than the ultimates alone predict. We will see how this affects the quantiles alike.

\begin{figure}[]
\centering
\includegraphics[width=12cm,trim=0.5cm 0.5cm 0.5cm 0.5cm,clip]{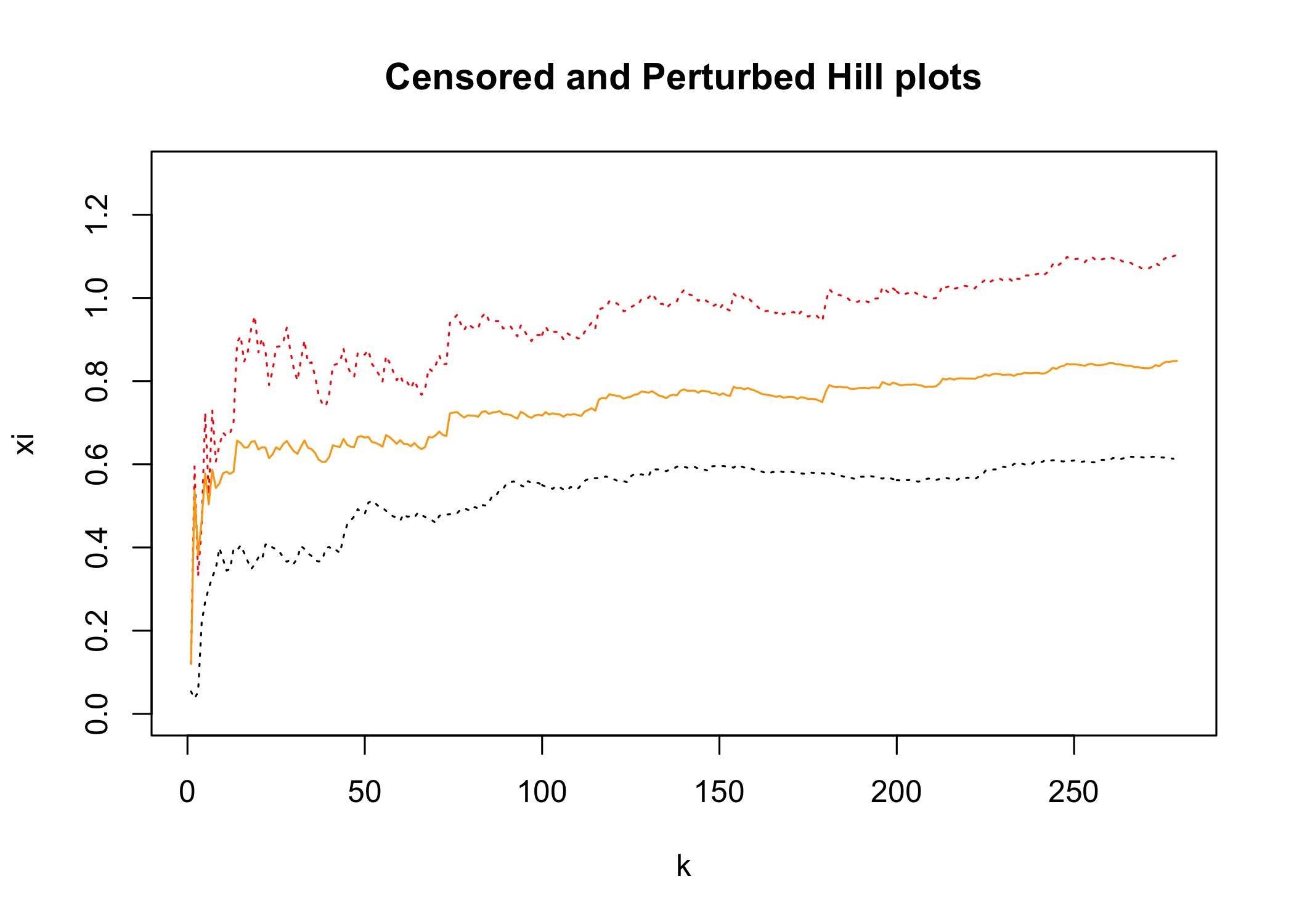}
\caption{Hill plot of the ultimates (black, dashed),
%together with the suggested $\xi=0.48$ value (black, solid)
censored Hill estimator $\hat\xi_k^{MLE}$ for the claims (red, dashed), and the combined estimator $\hat\xi_k^P$ (orange, solid) with $\lambda=1$ and $\beta=1/0.48$.} 
\label{Hill_plot}
\end{figure}

As a way of validating our estimation procedure we perform the following check. We consider all claims arriving in the shorter period of time $1995$-$2000$ and we follow exclusively these $310$ older claims until $2010$. The proportion of censoring at $2010$ drops to roughly $29.5\%$. We examine the censored Hill estimator and perturbed estimator (using the same $\lambda=1$ and $1/\beta=0.48$ as before) for this reduced data, and plot it in Figure \ref{Hill_plot_verif}, together with the corresponding estimators using the full data which we had previously obtained. We observe that the censored Hill estimator for the reduced data dropped its value in the most stable region by about $0.2$, almost reaching the perturbed estimator for both the complete and reduced datasets, showing that as the proportion of censoring decreases, the estimators come closer together. Notice that the perturbed estimator remained surprisingly stable, even when the penalization parameter stayed at the same value but the sample size decreased, due to the fact that the proportion of censored claims controlled the strength of the penalization in a natural way.

Finally, we add the corresponding analysis of the $99.5\%$ quantile (which is relevant for Value-at-Risk considerations) for the case where the expert quantile information is given by the empirical distribution function of the ultimates, and combine the Hill estimator and the expert information by means of Equation \eqref{perturbed_quantile}. That is,

\begin{align*}
\hat Q^{P}_k(1-p)&=\left[\hat Q^{KM}(1-k/n)\cdot\left(\frac{k}{np}\right)^{\hat \xi^{MLE}_k}\right]^{\hat p_k}\cdot \left[\hat Q^{ULT}(1-k/n)\cdot\left(\frac{k}{np}\right)^{1/\beta}\right]^{1-\hat p_k}\\
&=\left((\hat Q^{KM}(1-k/n))^{\hat p_k}(\hat Q^{EX}(1-k/n))^{1-\hat p_k}\right)^{\hat\xi^P_k},
\end{align*}
where $\hat p_k=\frac 1k \sum_{i=1}^k \ee^{(i)}$, $\hat Q^{KM}$ is the quantile function associated with the Kaplan-Meier curve of the claims, and $\hat Q^{ULT}$ is the quantile function associated with the empirical distribution function of the ultimates.

\noindent The quantile coming from the ultimates alone is given by
\begin{align*}
\hat Q^{ULT}_k(1-p)=\hat Q^{ULT}(1-k/n)\cdot\left(\frac{k}{np}\right)^{H^U_k},
\end{align*}
where $H^U_k$ is the Hill estimator of the ultimates. Finally, without any expert information (ignoring the ultimates), the quantile is given by
\begin{align*}
\hat Q^{KM}_k(1-p)=\hat Q^{KM}(1-k/n)\cdot\left(\frac{k}{np}\right)^{\hat \xi^{MLE}_k}.
\end{align*}
 
\noindent  Note that, due to missing IBNR data at the later accident years, some care is needed concerning the interpretation of these quantile estimates as the outcome levels which are exceeded in $100 \times 0.5 \%$ {\it of the reported cases.} However, as these IBNR data concern smaller losses, the influence of these omissions is limited as can be verified by restricting the proposed approach to the claims from earlier accident years and comparing with the present results. The result is gathered as a function of the number $k$ of upper order statistics in Figure \ref{VaR}. The combined estimation of the high quantile results is a stable compromise between the under-estimated quantiles from the expert opinions and the pure Weissman approach, which has higher variability. Such under-estimation of the size of the claims at closure by the ultimates was also observed empirically while exploring the data (details are omitted). Observe also Figure \ref{VaR_verif}, where the reduction of the data which was used above to validate the procedure was applied to the quantiles, and an analogous interpretation applies. This suggests that the current reserving could benefit from a re-evaluation. However, this analysis is made without knowing the actual process behind the calculation of the ultimates, and a deeper understanding of this process could in the future elucidate whether there is something being overlooked by the experts or by the statisticians.

\begin{figure}[]
\centering
\includegraphics[width=12cm,trim=0.5cm 0.5cm 0.5cm 0.5cm,clip]{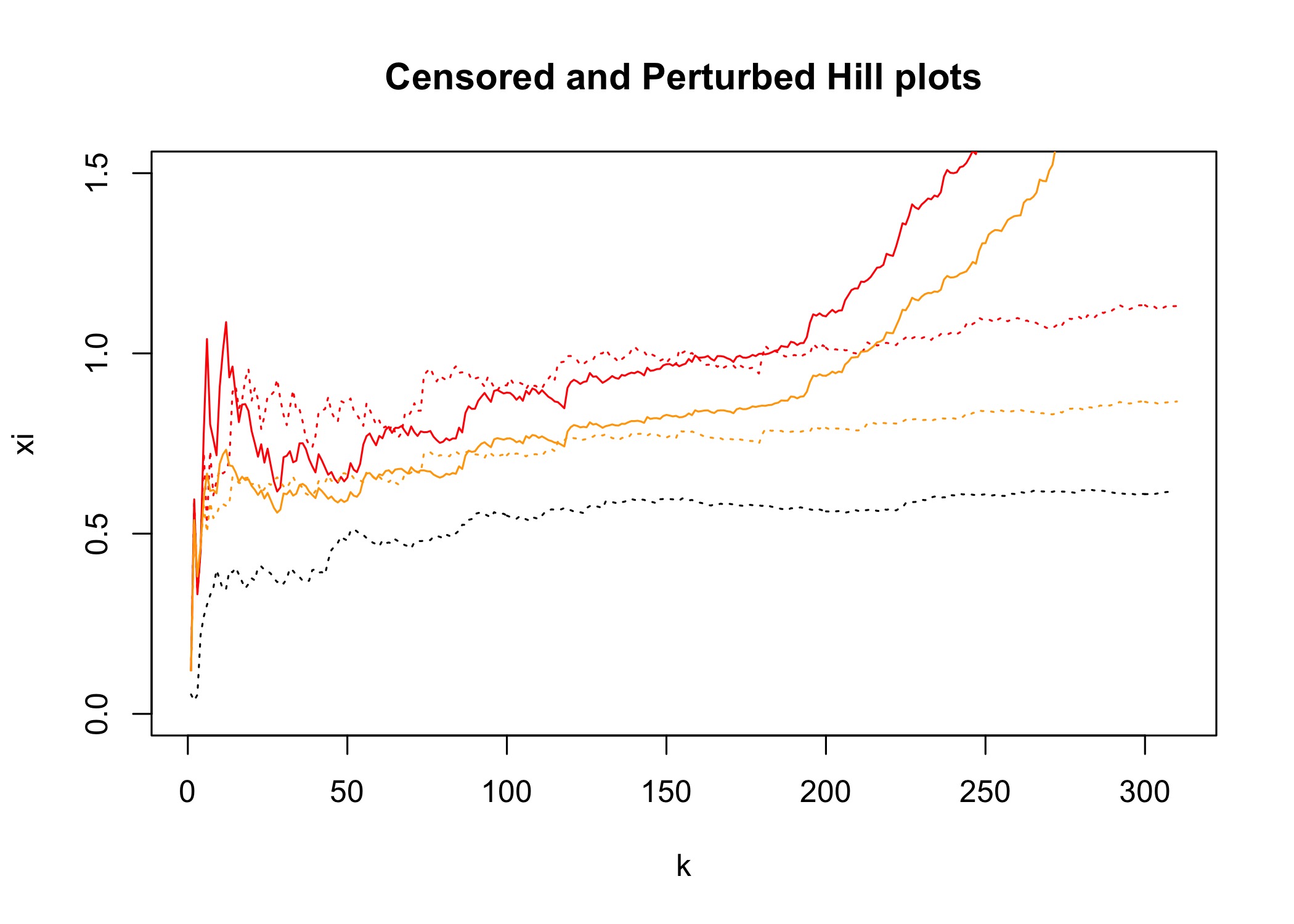}
\caption{Hill plot of the ultimates (black), for the reduced (solid) and complete (dashed) datasets: 
%together with the suggested $\xi=0.48$ value (black, solid)
censored Hill estimator $\hat\xi_k^{MLE}$ for the claims (red), and the combined estimator $\hat\xi_k^P$ (orange) with $\lambda=1$ and $\beta=1/0.48$.} 
\label{Hill_plot_verif}
\end{figure}

\begin{figure}[]
\centering
\includegraphics[width=12cm,trim=.5cm .5cm 0cm .5cm,clip]{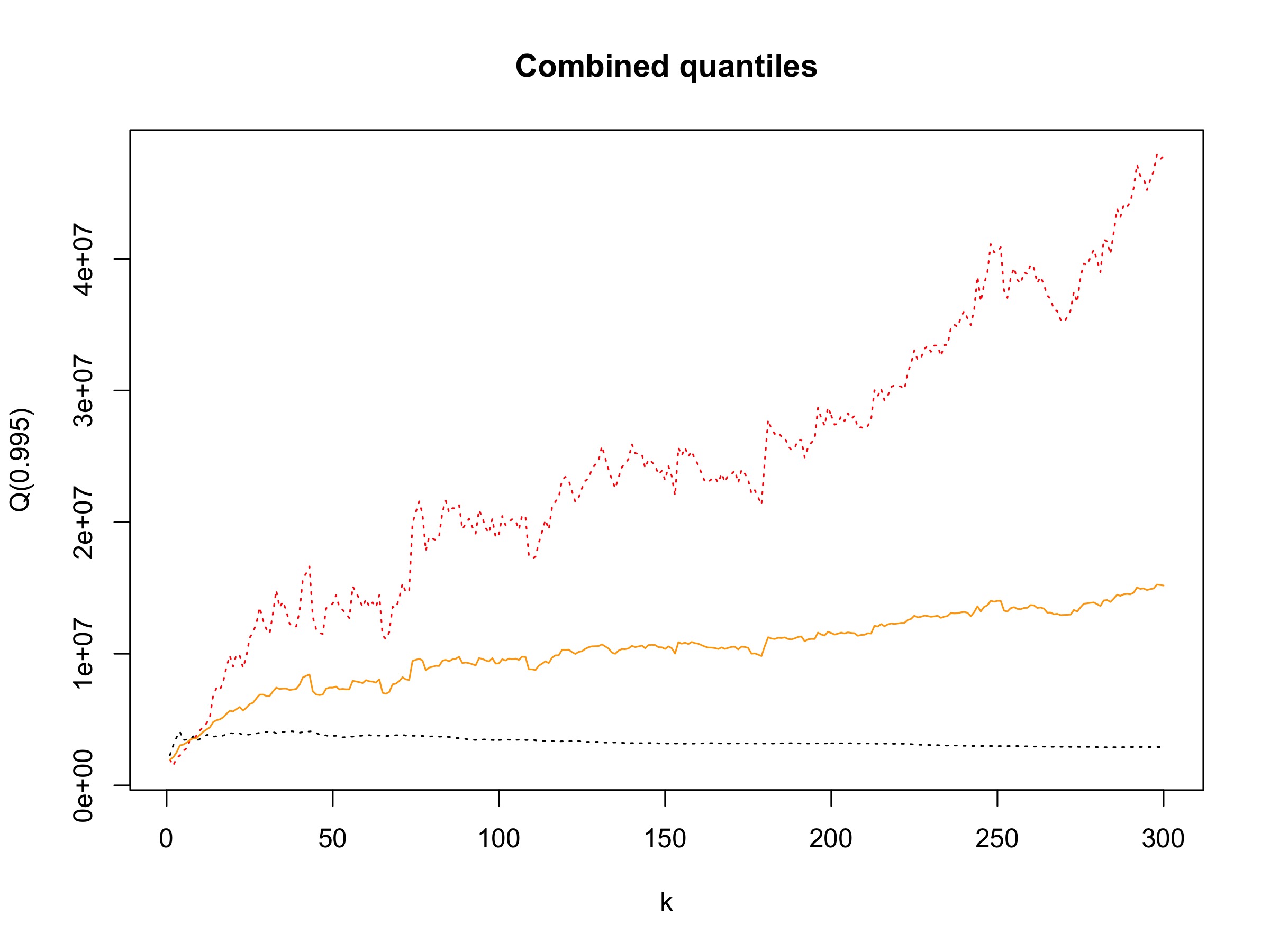}
\caption{$99.5\%$ quantile estimator using the censored approach ($\hat Q_k^{KM}(0.005)$, red) for the claims, expert information ($\hat Q_k^{ULT}(0.005)$, black) and their combination via $\hat \xi^{P}_u$, with the selection $\lambda=1$ ($\hat Q^P_k(0.005)$, orange).} 
\label{VaR}
\end{figure}

\begin{figure}[]
\centering
\includegraphics[width=12cm,trim=.5cm .5cm 0cm .5cm,clip]{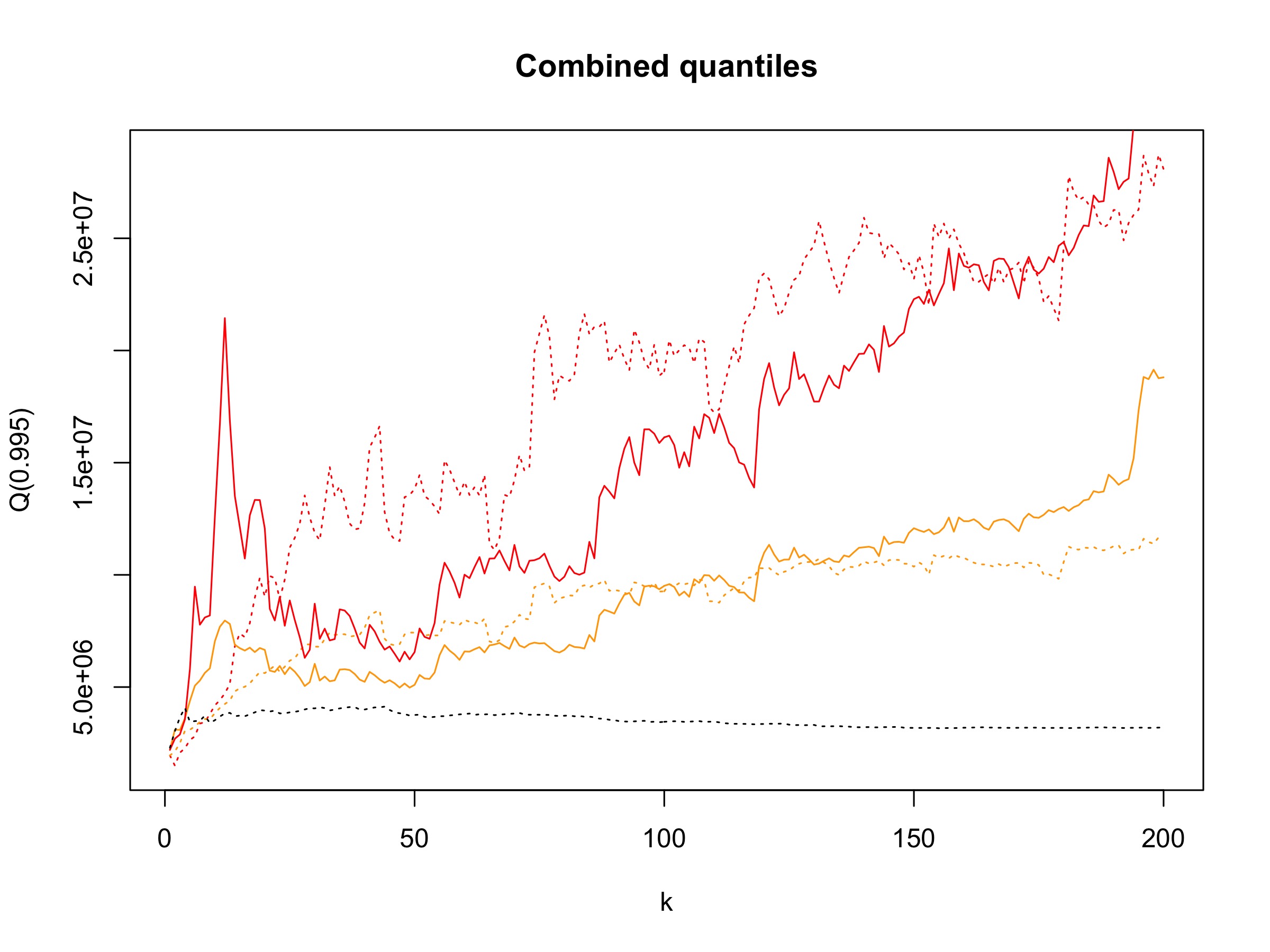}
\caption{
Plot of the $99.5\%$ quantile for the ultimates $\hat Q^{ULT}_k$ (black), and for the reduced (solid) and complete (dashed) datasets: 
%together with the suggested $\xi=0.48$ value (black, solid)
censored Hill estimation $\hat Q_k^{KM}$ for the claims (red), and the combined estimator $\hat Q_k^P$ (orange) with $\lambda=1$ and $\beta=1/0.48$.} 
\label{VaR_verif}
\end{figure}

\section{Conclusion}\label{conclusion_sect}
We have derived a flexible estimator that bridges statistical theory and practice when it comes to tail estimation. The results also apply to adaptation of quantile estimation techniques both when more expert information is available (for instance when an expert cumulative distribution function is available) and also when it is lacking. Like in Bayesian statistics, the strength of the belief of the expert is often subjective and in many cases unquantifiable, especially when provided with a single point estimate. As discussed in the paper, our method is in fact closely related to Bayesian techniques, but it is driven by the proportion of censoring, rather than by the number of total observations. The developed estimator represents a statistically sound method for making a compromise between expert information and likelihood methods, without the need of any additional prior assumptions, and its  performance depends on the quality of the expert guess. In particular, we suggested a convenient approach to avoid selection of a tuning parameter for the linking of expert information and Hill estimation. The methods developed can readily be adapted for the selection of the tuning parameter using more complex methods (such as moment matching) whenever there is more expert information available than presently assumed.

For heavy tails, the estimator is shown to be asymptotically normal, and has further desirable properties when the tuning parameter is chosen to be $1$. Indeed, Theorem \ref{lambdaone} can serve as a simple rule of thumb in practice for combining the two sources of information, and suggests that using good quality expert information can reduce the variance while keeping the bias at bay. This rule appears to be rather natural, and the approach in this paper enables to embed this intuitive combination within the theory of perturbed likelihood estimation. A more detailed analysis would depend on the specific application at hand, and on the quantifiability of the strength of beliefs, which in the present liability insurance dataset, and more generally in any analysis made by statisticians without the experts present, is commonly lacking.

A simulation study showed that when the guess is close to being correct, the estimator fares very favorably, compared to the Hill estimator and two recently proposed Bayesian solutions. Moreover, the estimator seems to be quite stable with respect to the chosen threshold, which is of particular interest since the choice of an appropriate threshold is a classical problem in extreme value analysis. Concerning quantiles, and for the simulated examples, the estimator was favorable to all the benchmarks for virtually all sample fractions for non-exact Pareto tails. Trimming techniques  have recently been proposed to address threshold selection, and a future line of research will be to consider lower-trimmed versions of the proposed perturbed estimator to aid in the visual and automatic sample fraction selection.

 Finally, the application of the method to actual motor third-party liability liability insurance data illustrates that decision makers with a strong belief in a point estimate of the tail parameter could be less reluctant to use the tail parameter and quantiles suggested by the inclusion of data-points proposed by our method than the one from the pure censored Hill estimation of the data. \\
 
 \textbf{Acknowledgement.} H.A. acknowledges financial support from the Swiss National Science Foundation Project 200021\_168993.

\appendix
\section{Proof of Theorem \ref{biasvar}}
\begin{proof}
 Define
\begin{align}
  V_k=\frac{1}{k}\sum_{i=1}^k(1-\ee^{(i)})(1-\lambda ),\quad  W_k=\frac{1}{k}\sum_{i=1}^k(1-\ee^{(i)})/\beta.
\end{align}
First note that
\[
 V_k\stackrel{d}{\to} r_1, \; 
 W_k \stackrel{d}{\to}_p r_2.
\]
Concerning the asymptotic distribution of $V_k$ and $W_k$ we make use of the method developed in \cite{einmahl2008statistics} introducing i.i.d.\ uniform (0,1) random variables $U_i$, $i \geq 1$, independent of the $Z_i$ sequence, and corresponding indicators being equal to
$1$ if $U \leq p(Z)$, and 0 otherwise. When denoting the $U$ variable induced by  $Z^{(i)}$ by $U^{(i)}$ we have
\[
V_k\stackrel{d}{=} {1 \over k}\sum_{j=1}^k 1_{ \{U^{(i)} > p (Z^{(i)})\} }
(1-\lambda)
\mbox{ and } 
W_k\stackrel{d}{=}{1 \over {\beta k}}\sum_{j=1}^k 1_{\{U^{(i)} > p (Z^{(i)})\}}
,
\]
which, under \eqref{biaspU}, can be replaced asymptotically by 
\[
\hat{V}_k= {1 \over k}\sum_{j=1}^k 1_{ \{U^{(i)} > p (U(n/i)) \} }
(1-\lambda )
\mbox{ and } 
\hat{W}_k= {1 \over {\beta k}}\sum_{j=1}^k 1_{ \{U^{(i)} > p (U(n/i))\}}
,
\]
for which
\begin{align*}
\sqrt{k}(\hat{V}_k-r_1)\stackrel{d}{\to} Y_1,\quad \sqrt{k}( \hat{W}_k-r_2)\stackrel{d}{\to}Y_2,
\end{align*}
where

%\[
%\sqrt{k}\left( \hat{p}_k -p \right) \stackrel{d}{\to} 
%\mathcal{N}\left(
%p(1-p)C^{-\nu_*/\alpha_c} (D/\alpha)_* {\alpha_c \nu_*\delta \over \alpha_c + \nu_*},
% p(1-p)\right).
%\]

\begin{eqnarray*}
Y_1&\sim& \mathcal{N}(-\frac{\delta \nu_*\alpha_c}{\alpha_c+\nu_\ast} p(1-p)C^{-\nu_*/\alpha_c}(D/\alpha)_*(1-\lambda),\; p(1-p)(1-\lambda)^2),\\ Y_2&\sim& \mathcal{N}(-\frac{\delta \nu_*\alpha_c}{\alpha_c+\nu_\ast} p(1-p)C^{-\nu_*/\alpha_c}(D/\alpha)_*/\beta,\; p(1-p)/\beta^2),
\end{eqnarray*}
which are independent of $Y_0$, defined in \eqref{y0}.
Then we obtain
\begin{eqnarray*}
\sqrt{k}\left(\frac{ H_k+\lambda  \hat{W}_k}{1- \hat{V}_k}-\frac{1+\lambda  \alpha_2/\beta}{\alpha + \lambda \alpha_2} \right)
&=&
\frac{1}{1- \hat{V}_k}\sqrt{k}\left( H_k-\frac{1}{\alpha_c}\right)+
\frac{\lambda }{1-\hat{V}_k} \sqrt{k}( \hat{W}_k-r_2)\\
&&+\left( \lambda r_2+\frac{1}{\alpha_c}\right) \sqrt{k}\left( \frac{1}{1- \hat{V}_k}-\frac{1}{1-r_1}\right) \\
&=& \frac{1}{1- \hat{V}_k} \sqrt{k}\left( H_k-\frac{1}{\alpha_c}\right)+\frac{\lambda}{1- \hat{V}_k} \sqrt{k}( \hat{W}_k-r_2)\\
&&+\frac{\lambda r_2+\frac{1}{\alpha+\alpha_2}}{(1-r_1)(1- \hat{V}_k)} \sqrt{k}\left(\hat{V}_k-r_1\right) \\
&&\stackrel{d}{\to} \frac{1}{1-r_1} Y_0+\frac{\lambda}{1-r_1}Y_2+\frac{\lambda r_2 +\frac{1}{\alpha_c}}{(1-r_1)^2}Y_1\\
&&=\frac{1}{1-r_1} Y_0+\left[\frac{\lambda}{1-r_1}\frac{1}{\beta(1-\lambda)}+\frac{\lambda r_2 +\frac{1}{\alpha_c}}{(1-r_1)^2}\right]Y_1
\end{eqnarray*}
The mean and variance are then computed from the last expression, since $Y_0$ and $Y_1$ are independent.
\end{proof}

\bibliographystyle{plain}
\bibliography{Exponential_penalization5.bib}

\begin{thebibliography}{15}
\providecommand{\natexlab}[1]{#1}
\providecommand{\url}[1]{\texttt{#1}}
\expandafter\ifx\csname urlstyle\endcsname\relax
  \providecommand{\doi}[1]{doi: #1}\else
  \providecommand{\doi}{doi: \begingroup \urlstyle{rm}\Url}\fi

\bibitem[Albrecher et~al.(2017)Albrecher, Teugels, and Beirlant]{abt}
Hansj{\"o}rg Albrecher, Jozef~L Teugels, and Jan Beirlant.
\newblock \emph{Reinsurance: Actuarial and Statistical Aspects}.
\newblock John Wiley \& Sons, 2017.

\bibitem[Beirlant et~al.(2007)Beirlant, Dierckx, Guillou, and
  Fils-Villetard]{beirlcens2007}
Jan Beirlant, Goedele Dierckx, Armelle Guillou, and Amelie Fils-Villetard.
\newblock Bias reduced tail estimation for censored pareto type distributions.
\newblock \emph{Extremes}, 10:\penalty0 151--174, 2007.

\bibitem[Einmahl et~al.(2008)Einmahl, Fils-Villetard, Guillou,
  et~al.]{einmahl2008statistics}
John~HJ Einmahl, Am{\'e}lie Fils-Villetard, Armelle Guillou, et~al.
\newblock Statistics of extremes under random censoring.
\newblock \emph{Bernoulli}, 14\penalty0 (1):\penalty0 207--227, 2008.

\bibitem[Worms and Worms(2014)]{worms2014new}
Julien Worms and Rym Worms.
\newblock New estimators of the extreme value index under random right
  censoring, for heavy-tailed distributions.
\newblock \emph{Extremes}, 17\penalty0 (2):\penalty0 337--358, 2014.

\bibitem[Ameraoui et~al.(2016)Ameraoui, Boukhetala, and
  Dupuy]{ameraoui2016bayesian}
Abdelkader Ameraoui, Kamal Boukhetala, and Jean-Fran{\c{c}}ois Dupuy.
\newblock Bayesian estimation of the tail index of a heavy tailed distribution
  under random censoring.
\newblock \emph{Computational Statistics \& Data Analysis}, 104:\penalty0
  148--168, 2016.

\bibitem[Beirlant et~al.(2018)Beirlant, Maribe, and
  Verster]{beirlant2018penalized}
Jan Beirlant, Gaonyalelwe Maribe, and Andrehette Verster.
\newblock Penalized bias reduction in extreme value estimation for censored
  pareto-type data, and long-tailed insurance applications.
\newblock \emph{Insurance: Mathematics and Economics}, 78:\penalty0 114--122,
  2018.

\bibitem[Bogaerts et~al.(2018)Bogaerts, Komarek, and Lesaffre]{lesaffre}
K.~Bogaerts, A.~Komarek, and E.~Lesaffre.
\newblock \emph{Survival analysis with interval-censored data}.
\newblock Chapman \& Hall, Boca Raton, 2018.

\bibitem[Hill(1975)]{Hill}
Bruce~M Hill.
\newblock A simple general approach to inference about the tail of a
  distribution.
\newblock \emph{The Annals of Statistics}, 3:\penalty0 1163--1174, 1975.

\bibitem[Embrechts et~al.(2013)Embrechts, Kl{\"u}ppelberg, and
  Mikosch]{embrechts2013modelling}
Paul Embrechts, Claudia Kl{\"u}ppelberg, and Thomas Mikosch.
\newblock \emph{Modelling extremal events: for insurance and finance},
  volume~33.
\newblock Springer Science \& Business Media, second edition, 2013.

\bibitem[Beirlant et~al.(2004)Beirlant, Goegebeur, Segers, and
  Teugels]{BGST2004}
Jan Beirlant, Yuri Goegebeur, Johan Segers, and Jozef Teugels.
\newblock \emph{Statistics of Extremes: Theory and Applications}.
\newblock Wiley, 2004.

\bibitem[Leung et~al.(1997)Leung, Elashoff, and Afifi]{leung1997censoring}
Kwan-Moon Leung, Robert~M Elashoff, and Abdelmonem~A Afifi.
\newblock Censoring issues in survival analysis.
\newblock \emph{Annual review of public health}, 18\penalty0 (1):\penalty0
  83--104, 1997.

\bibitem[Hall(1982)]{hall82}
Peter Hall.
\newblock On some simple estimates of an exponent of regular variation.
\newblock \emph{J. Roy. Statist. Soc. Ser. B}, 44\penalty0 (1):\penalty0
  37--42, 1982.

\bibitem[Weissman(1978)]{weissman1978estimation}
Ishay Weissman.
\newblock Estimation of parameters and large quantiles based on the k largest
  observations.
\newblock \emph{Journal of the American Statistical Association}, 73\penalty0
  (364):\penalty0 812--815, 1978.

\bibitem[Kaplan and Meier(1958)]{kaplan58}
E.~L. Kaplan and Paul Meier.
\newblock Nonparametric estimation from incomplete observations.
\newblock \emph{Journal of the American Statistical Association}, 53\penalty0
  (282):\penalty0 457--481, 1958.

\bibitem[Bladt et~al.(2019)Bladt, Albrecher, and Beirlant]{bladt}
Martin Bladt, Hansjoerg Albrecher, and Jan Beirlant.
\newblock Trimming and threshold selection in extremes.
\newblock \emph{arXiv:1903.07942}, 2019.

\end{thebibliography}

\end{document}